\newtheorem{assumption}{Assumption}
\newtheorem{definition}{Definition}
\newtheorem{algorithm}{Algorithm}
\newtheorem{theorem}{Theorem}
\newtheorem*{remark}{Remark}
\title{Distributed Economic Model Predictive Control -- Addressing Non-convexity Using Social Hierarchies}
\author[1]{Ali C. Kheirabadi}
\author[1]{Ryozo Nagamune}
\affil{The University of British Columbia, Vancouver Campus, 2054-6250 Applied Science Lane, Vancouver, BC Canada V6T 1Z4}
\begin{document}

\maketitle

\begin{abstract}
	
This paper introduces a novel concept for addressing non-convexity in the cost functions of distributed economic model predictive control (DEMPC) systems. Specifically, the proposed algorithm enables agents to self-organize into a hierarchy which determines the order in which control decisions are made. This concept is based on the formation of social hierarchies in nature. An additional feature of the algorithm is that it does not require stationary set-points that are known \textit{a priori}. Rather, agents negotiate these targets in a truly distributed and scalable manner. Upon providing a detailed description of the algorithm, guarantees of convergence, recursive feasibility, and bounded closed-loop stability are also provided. Finally, the proposed algorithm is compared against a basic parallel distributed economic model predictive controller using an academic numerical example.

\end{abstract}

\section{Introduction}

\subsection{Background}

Model predictive control (MPC) entails recursively solving an optimization problem over a finite prediction horizon to identify optimal future control input trajectories. The popularity of MPC in academic and industrial environments is primarily attributed to its capacity for handling constraints while computing control actions that minimize nonlinear performance criteria. The reader may refer to articles by Mayne \textit{et al.}~\cite{Mayne2000} and Mayne~\cite{Mayne2014} for reviews on MPC.

In large-scale processes or multi-agent systems, implementation of MPC in a centralized manner may be impractical due to the computational complexity of global optimization and the network infrastructure required for plant-wide communication. Distributed model predictive control (DMPC) surpasses these limitations by dispersing the burden of decision-making across a multitude of independent subsystems or agents. A trade-off that arises however, is that effective algorithms governing agent coordination are required to guarantee desirable closed-loop performance. The reader may refer to review articles by Al-Gherwi \textit{et al.}~\cite{Al-Gherwi2011a}, Christofides \textit{et al.}~\cite{Christofides2013}, and Negenborn and Maestre~\cite{Negenborn2014} for further details on the subject of DMPC.

MPC has traditionally been utilized as a lower-level regulator and stabilizer that tracks set-points determined by upper-level stationary optimizers. Economic model predictive control (EMPC) combines these upper- and lower-level roles by employing cost functions that capture plant economics (\textit{e.g.} power production or operating cost over a finite time horizon). The effect is improved economic performance; however, additional measures for ensuring stability are required since the primary control objective no longer involves regulation. The reader may refer to articles by Ellis \textit{et al.}~\cite{Ellis2014} and M\"uller and Allg\"ower~\cite{Muller2017} for reviews on EMPC.

\subsection{Distributed economic model predictive control}

This paper addresses distributed economic model predictive control (DEMPC) of systems with non-convex objective functions and unknown stationary set-points. Applications with such characteristics include autonomous vehicle trajectory planning~\cite{Eilbrecht2019} and floating offshore wind farm control~\cite{Kheirabadi2020}. DEMPC algorithms intended for such systems have been scarce in the literature as a result of challenges pertaining to stability and convergence. This subsection reviews relevant DEMPC and nonlinear DMPC algorithms as justification for the contributions of the current work.

\subsubsection{Stabilizing DEMPC algorithms}

Achieving stability in DEMPC requires first computing optimal stationary set-points for all agents, and then constraining state trajectories to approach these optima within the prediction horizon. If there exist feedback control laws that are then capable of maintaining subsystems within specified bounds of their respective steady-states, stability may be guaranteed. To achieve such an outcome, theoretical studies focused on DEMPC have either treated these stationary set-points as predefined references~\cite{Wang2017a}, or computed their values using centralized optimization~\cite{Lee2011,Lee2012,Driessen2012a,Chen2012,Albalawi2017a,Wolf2012}. The latter group of algorithms are therefore not truly distributed.

To overcome this gap, K\"ohler \textit{et al.}~\cite{Kohler2018b} were the first to develop a DEMPC scheme without the requirement for centralized processing. They presumed that optimal stationary set-points were unattainable via centralized optimization, and instead had to be negotiated online between agents in a distributed manner. Consequently, in tandem with solving their local EMPC problems and obtaining optimal input trajectories, agents also performed one iterate of a distributed coordination algorithm at each sampling time to update their respective optimal steady-states. Nonetheless, this work focused on linear systems with convex cost functions and used a sequential coordination algorithm~\cite{Kuwata2007,Richards2007}; thus suffering from lack of scalability.

\subsubsection{Convergent DEMPC algorithms}

If DEMPC cost functions are non-convex, agents making decisions in parallel cannot guarantee convergence of their optimal input trajectories~\cite{Liu2012a}. Several alternative classes of coordination algorithms within the nonlinear DMPC literature address this convergence issue. Sequential methods first proposed by Kuwata \textit{et al.}~\cite{Kuwata2007} and Richards and How~\cite{Richards2007} represent the simplest solution. Agents solve their local optimization problems and exchange information with their neighbors in some predetermined order. The resulting advantage is that each subsequent agent computes its input trajectory based on updated and fixed information from its predecessors; guaranteeing convergence, stability, and feasibility is thus facilitated. The major drawback is lack of scalability to large interconnected systems, since agents at the tail-end of sequence must await decisions from all other subsystems. A secondary concern involves predetermining the sequence order, particularly in systems with time-varying interaction topologies.

Coordination algorithms based on negotiation between agents were developed by M\"uller \textit{et al.}~\cite{Muller2012a}, Maestre \textit{et al.}~\cite{Maestre2011}, and Stewart \textit{et al.}~\cite{Stewart2011}. An agent receives optimal decisions from its neighbors in the form of a proposal. Then, upon computing the corresponding effects of these decisions on its local objective function, the agent may reject or approve proposals. These algorithms are capable of resolving conflict; however they face two limitations. The first is that, in order to identify the impact of a specific agent's control trajectory on neighboring cost functions, this agent must not operate in parallel with others; thus limiting scalability. The second is that agents whose control actions are discarded at particular time-steps remain idle. Finally, these algorithms possess no learning mechanism to ensure that, after a sufficient number of negotiations, proposals are guaranteed or more likely to be approved.

Finally, group-based DMPC methods employ the connectivity information of a plant to identify the order in which agents should solve their local MPC problems to resolve conflict. Pannek~\cite{Pannek2013} proposed a covering algorithm that permitted non-interacting agent pairs to operate in parallel, while those that were coupled made decisions sequentially according to some predetermined priority rule. This algorithm eliminated the scalability issue of pure sequential DMPC; however it required a predetermined set of priority rules. Liu \textit{et al.}~\cite{Liu2019a} developed a clustering algorithm that assigned agents to dominant or connecting groups. Agents in dominant clusters solved their local optimization problems first, thus eliminating conflict with agents in connecting groups. The downside in this method was that a sequential algorithm was required to determine clustering. Asadi and Richards~\cite{Asadi2018} employed a slot allocation algorithm wherein each agent communicated with other subsystems to randomly select an available space in the global sequential order. This method addressed the secondary drawback of sequential DMPC, which concerned determining an effective sequence order in systems with time-varying interaction topologies. Nonetheless, the fully serial nature of the algorithm still suffered from lack of scalability.

\subsection{Contributions}

Based on the preceding literature review, we state, to the best of our knowledge, that a DEMPC algorithm that handles non-convex cost functions and unknown stationary set-points in a scalable and truly distributed manner, with no predetermined rules, has yet to be proposed. The existing method that meets all of these criteria except for scalability and non-convexity is the algorithm of K\"ohler \textit{et al.}~\cite{Kohler2018b}.

The main contribution in this paper is thus a DEMPC coordination algorithm that is scalable, fully distributed, and that guarantees stability and convergence in the presence of non-convex cost functions and unknown stationary set-points. In brief, our approach borrows from the method of conflict resolution observed in nature. Namely, when it becomes apparent that agents operating in parallel generate conflicting decisions, a social hierarchy is established to yield resolution. Additionally, proofs of convergence, recursive feasibility, and bounded closed-loop stability are provided along with validation using a numerical example.

\subsection{Paper organization}

The remainder of this paper is organized as follows: Section~\ref{Section - Problem description} provides a description of the nonlinear systems and cost functions that the proposed algorithm addresses, along with an explanation of conflict and convergence issues arising from non-convex objectives; Section~\ref{Section - DEMPC algorithm} highlights the proposed DEMPC algorithm along with proofs of convergence, feasibility, and stability; Section~\ref{Section - Numerical example} implements the proposed method on a numerical example with non-convex cost functions; and finally, Section~\ref{Section - DEMPC - Conclusions} concludes the paper with a summary of major findings, along with recommendations for future research directions.

\section{Problem description} \label{Section - Problem description}


\subsection{Notation}

This brief subsection introduces the reader to the set theory and other notation used in this work. The term $\mathbb{I}_{a:b}$ indicates a set of real integers ranging from $a$ to $b$. The symbols $\mathbf{x} \in \mathbb{R}^n$ state that $\mathbf{x}$ is a real-valued vector of dimensions $n \times 1$. The expression $\mathcal{A} \setminus \mathcal{B}$ denotes the difference between the sets $\mathcal{A}$ and $\mathcal{B}$ (\textit{i.e.} the set $\mathcal{A}$ with all elements of set $\mathcal{B}$ removed). The operation $\mathcal{A} \times \mathcal{B}$ yields the Cartesian product of the sets $\mathcal{A}$ and $\mathcal{B}$.

\subsection{Dynamic model}

We consider $N$ agents that are dynamically decoupled and uninfluenced by disturbances. The dynamics of each agent~$i \in \mathcal{I} = \left\{ 1, 2, \cdots, N \right\}$ are represented by the following discrete-time nonlinear state-space model:
\begin{equation}
	\mathbf{x}_i^+ = \mathbf{f}_i( \mathbf{x}_i, \mathbf{u}_i),
\end{equation}
where $\mathbf{x}_i \in \mathbb{R}^{n_i}$ and $\mathbf{u}_i \in \mathbb{R}^{m_i}$ denote vectors containing the $n_i$ states and $m_i$ inputs of agent~$i$, and $\mathbf{x}_i^+$ represents $\mathbf{x}_i$ at the subsequent sampling time-step. We consider the case where $\mathbf{x}_i$ and $\mathbf{u}_i$ must be bounded within the convex sets $\mathcal{X}_i$ and $\mathcal{U}_i$ at all times, which results in the following state and input constraints:
\begin{eqnarray}
	\mathbf{u}_i & \in & \mathcal{U}_i, \\
	\mathbf{x}_i & \in & \mathcal{X}_i.
\end{eqnarray}

With these operational bounds defined, we make the following assumptions concerning controllability and continuity.
\begin{assumption} \label{Assumption - Controllability}
	(Weak controllability) Let the set $\mathcal{Z}_i^s$ comprise all feasible stationary points of agent~$i$ as follows:
	\begin{equation}
		\mathcal{Z}_i^s \coloneqq \left\{ (\mathbf{x}_i, \mathbf{u}_i) \in \mathcal{X}_i \times \mathcal{U}_i~|~\mathbf{x}_i = \mathbf{f}_i(\mathbf{x}_i, \mathbf{u}_i) \right\}.
	\end{equation}
	All feasible stationary state vectors of agent~$i$ may then be collected within the set $\mathcal{X}_i^s$, which is defined as follows:
	\begin{equation}
		\mathcal{X}_i^s \coloneqq \left\{ \mathbf{x}_i \in \mathcal{X}_i~|~\exists \mathbf{u}_i \in \mathcal{U}_i : \left( \mathbf{x}_i, \mathbf{u}_i \right) \in \mathcal{Z}_i^s \right\}.
	\end{equation}
	
	Let the set $\mathcal{Z}_i^{0 \rightarrow s}$ contain all pairings of initial state vectors $\mathbf{x}_i^0$ and input trajectories $\overline{\mathbf{u}}_i = \left( \mathbf{u}_i^0, \mathbf{u}_i^1, \cdots, \mathbf{u}_i^{H - 1} \right)$ that steer agent~$i$ to each feasible stationary point $\mathbf{x}_i^s$ in $H$ time-steps, while satisfying constraints. $\mathcal{Z}_i^{0 \rightarrow s}$ may therefore be defined as follows:
	\begin{equation}
		\begin{split}
			\mathcal{Z}_i^{0 \rightarrow s} \coloneqq & \left\{ (\mathbf{x}_i^0, \overline{\mathbf{u}}_i, \mathbf{x}_i^s) \in \mathcal{X}_i \times \overline{\mathcal{U}}_i \times \mathcal{X}_i^s~| \right. \\
			& \left. \exists \mathbf{x}_i^1, \mathbf{x}_i^2, \cdots, \mathbf{x}_i^H : \mathbf{x}_i^k = \mathbf{f}_i(\mathbf{x}_i^{k - 1}, \mathbf{u}_i^{k - 1}), \right. \\
			& \left. \mathbf{x}_i^k \in \mathcal{X}_i, \forall k \in \mathbb{I}_{1 : H}, \mathbf{x}_i^H = \mathbf{x}_i^s \right\},
		\end{split}
	\end{equation}
	where $\overline{\mathcal{U}}_i = \mathcal{U}_i \times \cdots \times \mathcal{U}_i = \mathcal{U}_i^H$. All possible initial state vectors $\mathbf{x}_i^0$ that may be steered to a feasible stationary point $\mathbf{x}_i^s$, with constraint satisfaction, are then contained within the set $\mathcal{X}_i^{0 \rightarrow s}$ defined as follows:
	\begin{equation} \label{Equation - Feasible initial state vector set}
		\mathcal{X}_i^{0 \rightarrow s} \coloneqq \left\{ \mathbf{x}_i^0 \in \mathcal{X}_i~|~\exists \overline{\mathbf{u}}_i \in \overline{\mathcal{U}}_i, \mathbf{x}_i^s \in \mathcal{X}_i^s : \left( \mathbf{x}_i^0, \overline{\mathbf{u}}_i, \mathbf{x}_i^s \right) \in \mathcal{Z}_i^{0 \rightarrow s} \right\}.
	\end{equation}
	
	For any agent~$i \in \mathcal{I}$, any initial state vector $\mathbf{x}_i^0 \in \mathcal{X}_i^{0 \rightarrow s}$, input vector trajectory $\overline{\mathbf{u}}_i \in \overline{\mathcal{U}}_i$, and stationary state vector $\mathbf{x}_i^s \in \mathcal{X}_i^s$ such that $\left( \mathbf{x}_i^0, \overline{\mathbf{u}}_i, \mathbf{x}_i^s \right) \in \mathcal{Z}_i^{0 \rightarrow s}$, and any stationary input vector $\mathbf{u}_i^s \in \mathcal{U}_i$ such that $\left( \mathbf{x}_i^s, \mathbf{u}_i^s \right) \in \mathcal{Z}_i^s$, there exists a $\mathcal{K}_\infty$ function $\gamma(\cdot)$ that satisfies the following condition:
	\begin{equation}
		\sum_{k = 0}^{H - 1} \Vert \mathbf{u}_i^k - \mathbf{u}_i^s \Vert \leq \gamma(\Vert \mathbf{x}_i^0 - \mathbf{x}_i^s \Vert).
	\end{equation}
\end{assumption}
\begin{remark}
	The weak controllability assumption simply states that, for any feasible stationary point, there exists some surrounding set from which an initial state vector may be steered to the stationary point. This assumption is necessary for guaranteeing feasibility of the optimization problem of the DEMPC algorithm since reaching a stationary target is one of its constraints. Therefore, if an input trajectory exists that can steer an initial state vector to a stationary point, then a solution to the optimization problem exists that satisfies its constraints.
\end{remark}

\begin{assumption} \label{Assumption - Lipschitz dynamics}
	(Lipschitz continuous dynamics) For any agent~$i \in \mathcal{I}$, $\mathbf{f}_i(\cdot)$ satisfies the following condition for Lipschitz continuity for all $(\mathbf{x}_i^a, \mathbf{u}_i^a), (\mathbf{x}_i^b, \mathbf{u}_i^b) \in \mathcal{X}_i \times \mathcal{U}_i$:
	\begin{equation}
		\left\Vert \mathbf{f}_i(\mathbf{x}_i^b, \mathbf{u}_i^b) - \mathbf{f}_i(\mathbf{x}_i^a, \mathbf{u}_i^a) \right\Vert \leq \Lambda_i^f \left\Vert (\mathbf{x}_i^b, \mathbf{u}_i^b) - (\mathbf{x}_i^a, \mathbf{u}_i^a) \right\Vert,
	\end{equation}
	where the scalar $\Lambda_i^f \geq 0$ is the Lipschitz constant of $\mathbf{f}_i(\cdot)$ on the set $\mathcal{X}_i \times \mathcal{U}_i$.
\end{assumption}
\begin{remark}
	Lipschitz continuity simply states that the function $\mathbf{f}_i(\cdot)$ must be continuous. In other words, there must exist no discontinuities along $\mathbf{f}_i(\cdot)$ that lead to an undefined gradient. This assumption is necessary for guaranteeing optimality in the solution of an optimization problem. If gradients are defined, a local minimum of a cost function will be reached after a sufficient number of iterations.
\end{remark}

\subsection{Control objective}

At each time-step, the control objective of agent~$i$ is to minimize a cooperative economic stage cost function $J_i(\cdot)$ over a finite prediction horizon $H$ as follows:
\begin{equation}
	\min \sum_{k = 0}^{H - 1} J_i(\mathbf{x}_i^k, \mathbf{u}_i^k, \mathbf{x}_{-i|J}^k, \mathbf{u}_{-i|J}^k),
\end{equation}
where the superscript $k$ identifies the time-step number along the prediction horizon $H$, $\mathbf{x}_i^k$ and $\mathbf{u}_i^k$ denote the state and input vectors of agent~$i$ at time-step $k$ along the prediction horizon, and $\mathbf{x}_{-i|J}^k$ and $\mathbf{u}_{-i|J}^k$ contain the state and input vectors at time-step $k$ along the prediction horizon of all agents~$j \in \mathcal{I} \setminus i$ that influence the cooperative cost function $J_i(\cdot)$ of agent~$i$. We collect the indices of these agents into the set $\mathcal{N}_{-i|J}$. Likewise, the indices of all agents~$j \in \mathcal{I} \setminus i$ whose cooperative stage cost functions $J_j(\cdot)$ are influenced by $\mathbf{x}_i$ and $\mathbf{u}_i$ are collected into the set $\mathcal{N}_{+i|J}$.

The objective function $J_i(\cdot)$ may be non-convex; however, it must adhere to the following assumptions concerning cooperation, boundedness, and continuity.
\begin{assumption} \label{Assumption - Neighborhood cooperative objectives}
	(Neighborhood-cooperative objectives) Let each agent~$i \in \mathcal{I}$ possess a stage cost function $\ell_i(\cdot)$ that represents its local economic interests. Then, let the set $\mathcal{N}_{-i}$ contain the indices of all agents~$j \in \mathcal{I} \setminus i$ whose state and input vectors $\mathbf{x}_j$ and $\mathbf{u}_j$ influence the local stage cost function $\ell_i(\cdot)$. Likewise, let the set $\mathcal{N}_{+i}$ contain the indices of all agents~$j \in \mathcal{I} \setminus i$ whose local stage cost functions $\ell_j(\cdot)$ are influenced by $\mathbf{x}_i$ and $\mathbf{u}_i$.
	
	The stage cost function $J_i(\cdot)$ for any agent~$i \in \mathcal{I}$ is neighborhood-cooperative in that it comprises the local interests of agent~$i$ and those of each downstream neighbor $j \in \mathcal{N}_{+i}$ as follows:
	\begin{equation} \label{Equation - Cooperative cost function}
		J_i(\mathbf{x}_i, \mathbf{u}_i, \mathbf{x}_{-i|J}, \mathbf{u}_{-i|J}) \coloneqq \ell_i(\mathbf{x}_i, \mathbf{u}_i, \mathbf{x}_{-i}, \mathbf{u}_{-i}) + \sum_{j \in \mathcal{N}_{+i}} \ell_j(\mathbf{x}_j, \mathbf{u}_j, \mathbf{x}_{-j}, \mathbf{u}_{-j}).
	\end{equation}
	The vectors $\mathbf{x}_{-i}$ and $\mathbf{u}_{-i}$ contain the states and inputs of all agents~$j \in \mathcal{N}_{-i}$. Note that $\mathcal{N}_{-i|J} = \mathcal{N}_{-i} \cup \mathcal{N}_{+i} \cup \mathcal{N}_{-j} \forall j \in \mathcal{N}_{+i}$.
\end{assumption}
\begin{remark}
	The purpose of the social hierarchy-based DEMPC algorithm presented in this work is to enable coupled agents to compute optimal decisions that are mutually beneficial. A fundamental requirement for this algorithm is therefore that coupled agents share interests; hence the assumption of neighborhood-cooperative cost functions. If this assumption is not present, then the agents are competitive, and mutually beneficial decisions cannot be guaranteed.
\end{remark}

\begin{assumption} \label{Assumption - Bounded cost function minima}
	(Bounded cost function minima) Let the sets $\mathcal{X}_{-i}$ and $\mathcal{U}_{-i}$ be defined as follows:
	\begin{eqnarray}
		\mathcal{X}_{-i} & \coloneqq & \prod_{j \in \mathcal{N}_{-i}} \mathcal{X}_j, \\
		\mathcal{U}_{-i} & \coloneqq & \prod_{j \in \mathcal{N}_{-i}} \mathcal{U}_j,
	\end{eqnarray}
	For any agent~$i \in \mathcal{I}$, there exist state and input vectors $\left( \mathbf{x}_i^*, \mathbf{u}_i^*, \mathbf{x}_{-i}^*, \mathbf{u}_{-i}^* \right) \in \mathcal{X}_i \times \mathcal{U}_i \times \mathcal{X}_{-i} \times \mathcal{U}_{-i}$ such that the following condition holds for all $\left( \mathbf{x}_i, \mathbf{u}_i, \mathbf{x}_{-i}, \mathbf{u}_{-i} \right) \in \mathcal{X}_i \times \mathcal{U}_i \times \mathcal{X}_{-i} \times \mathcal{U}_{-i}$:
	\begin{equation}
		\ell_i(\mathbf{x}_i^*, \mathbf{u}_i^*, \mathbf{x}_{-i}^*, \mathbf{u}_{-i}^*) \leq \ell_i(\mathbf{x}_i, \mathbf{u}_i, \mathbf{x}_{-i}, \mathbf{u}_{-i}).
	\end{equation}
\end{assumption}
\begin{remark}
	The bounded cost function minima assumption is necessary for guaranteeing convergence of optimization problems. If the global minimum of a cost function is finite, then an optimization algorithm that descends along the gradients of the cost function is guaranteed to reach a point that satisfies optimality conditions after a sufficient number of iterations.
\end{remark}

\begin{assumption} \label{Assumption - Lipschitz cost functions}
	(Lipschitz continuous objectives) For any agent~$i \in \mathcal{I}$, the local cost function $\ell_i(\cdot)$ satisfies the following condition for Lipschitz continuity for all $(\mathbf{x}_i^a, \mathbf{u}_i^a, \mathbf{x}_{-i}^a, \mathbf{u}_{-i}^a), (\mathbf{x}_i^b, \mathbf{u}_i^b, \mathbf{x}_{-i}^b, \mathbf{u}_{-i}^b) \in \mathcal{X}_i \times \mathcal{U}_i \times \mathcal{X}_{-i} \times \mathcal{U}_{-i}$:
	\begin{equation}
		\left\Vert \ell_i(\mathbf{x}_i^b, \mathbf{u}_i^b, \mathbf{x}_{-i}^b, \mathbf{u}_{-i}^b) - \ell_i(\mathbf{x}_i^a, \mathbf{u}_i^a, \mathbf{x}_{-i}^a, \mathbf{u}_{-i}^a) \right\Vert \leq \Lambda_i^\ell \left\Vert (\mathbf{x}_i^b, \mathbf{u}_i^b, \mathbf{x}_{-i}^b, \mathbf{u}_{-i}^b) - (\mathbf{x}_i^a, \mathbf{u}_i^a, \mathbf{x}_{-i}^a, \mathbf{u}_{-i}^a) \right\Vert,
	\end{equation}
	where the scalar $\Lambda_i^\ell \geq 0$ is the Lipschitz constant of $\ell_i(\cdot)$ on the set $\mathcal{X}_i \times \mathcal{U}_i \times \mathcal{X}_{-i} \times \mathcal{U}_{-i}$.
\end{assumption}
\begin{remark}
	Refer to the remark of Assumption~\ref{Assumption - Lipschitz dynamics} for a simplified explanation of Lipschitz continuity.
\end{remark}

\subsection{Conflict under non-convexity} \label{Subsection - Conflict explanation}

In this subsection, we elaborate further on the main challenge that is associated with non-convex cost functions in DEMPC. Consider a simple problem with only two optimization variables $z_1$ and $z_2$, which are computed by agents~1 and~2, respectively. Further, let both agents share a common non-convex global objective function with contours plotted in Fig.~\ref{Figure - Example non-convex optimization}.

\begin{figure}
	\centering
	\includegraphics[width=3.5in]{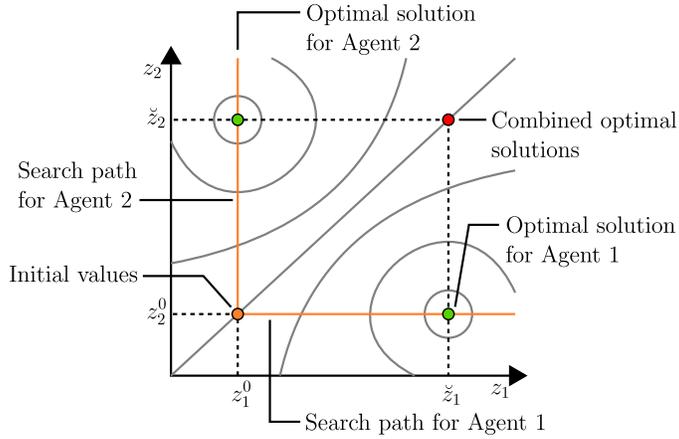}
	\caption{A visual explanation of conflict generated as a result of non-convexity in distributed parallel optimization.} \label{Figure - Example non-convex optimization}
\end{figure}

Assume initial values $z_1^0$ and~$z_2^0$ obtained from a previous iteration or time-step. Under parallel and fully distributed operation, each agent must assume that its neighbors optimization variable remains unchanged while locally minimizing the global objective function. As a result, agent~1 assumes that $z_2$ remains fixed at $z_2^0$ and restricts its search path to the horizontal orange line shown in Fig.~\ref{Figure - Example non-convex optimization}. Likewise, agent~2 assumes that $z_1$ is maintained at $z_1^0$, which constrains its search path to the vertical orange line.

Upon completion of its local optimization problem, agent~1 finds the local optimum located at $\left( \breve{z}_1, z_2^0  \right)$. Agent~2 achieves the same at $\left( z_1^0, \breve{z}_2 \right)$. When the updated optimal variables $\breve{z}_1$ and~$\breve{z}_2$ are combined however, the overall system operates at neither of the local optima identified by the individual agents. We refer to such an outcome as conflict in the current work. Specifically, we define conflict and conflict-free operation as follows.
\begin{definition} \label{Definition - Conflict}
	(Conflict) Agent~$i$ encounters conflict when its economic performance deteriorates upon considering the optimal control actions of its neighbors. More formally, consider $\hat{V}_i^s$ and $\breve{V}_i^s$ defined as follows:
	\begin{eqnarray}
		\hat{V}_i^s & \coloneqq & J_i(\breve{\mathbf{x}}_i^s, \breve{\mathbf{u}}_i^s, \hat{\mathbf{x}}_{-i|J}^s, \hat{\mathbf{u}}_{-i|J}^s), \label{Equation - Naive optimal stage cost function} \\
		\breve{V}_i^s & \coloneqq & J_i(\breve{\mathbf{x}}_i^s, \breve{\mathbf{u}}_i^s, \breve{\mathbf{x}}_{-i|J}^s, \breve{\mathbf{u}}_{-i|J}^s), \label{Equation - Informed optimal stage cost function}
	\end{eqnarray}
	where $\breve{\mathbf{x}}_i^s$ and $\breve{\mathbf{u}}_i^s$ denote the optimal stationary state and input vectors computed by agent~$i$, $\hat{\mathbf{x}}_{-i|J}^s$ and $\hat{\mathbf{u}}_{-i|J}^s$ contain stationary state and input vectors that agent~$i$ assumes for all neighbors~$j \in \mathcal{N}_{-i|J}$, and $\breve{\mathbf{x}}_{-i|J}^s$ and $\breve{\mathbf{u}}_{-i|J}^s$ consist of optimal state and input vectors computed by all agents $j \in \mathcal{N}_{-i|J}$ and communicated to agent~$i$. The terms $\hat{V}_i^s$ and $\breve{V}_i^s$ represent naive and informed values of the stage cost function of agent~$i$ at some stationary point, respectively. The term naive indicates that the cost function value is computed based on assumed values of neighboring agents' state and input vectors. Contrarily, the term informed is employed when agent~$i$ considers recently communicated updated optimal state and input vectors. Given these definitions, while attempting to negotiate an optimal stationary point $\left( \breve{\mathbf{x}}_i^s, \breve{\mathbf{u}}_i^s \right)$, agent~$i$ operates in conflict with its neighbors if the following statement is true:
	\begin{equation}
		\breve{V}_i^s > \hat{V}_i^s.
	\end{equation}
	
	Similarly, one may define naive and informed values of cost functions summed along the prediction horizon as follows:
	\begin{eqnarray}
		\hat{V}_i & \coloneqq & \sum_{k = 0}^{H - 1} J_i(\breve{\mathbf{x}}_i^k, \breve{\mathbf{u}}_i^k, \hat{\mathbf{x}}_{-i|J}^k, \hat{\mathbf{u}}_{-i|J}^k), \label{Equation - Naive optimal cost function} \\
		\breve{V}_i & \coloneqq & \sum_{k = 0}^{H - 1} J_i(\breve{\mathbf{x}}_i^k, \breve{\mathbf{u}}_i^k, \breve{\mathbf{x}}_{-i|J}^k, \breve{\mathbf{u}}_{-i|J}^k), \label{Equation - Informed optimal cost function}
	\end{eqnarray}
	where $\breve{\mathbf{x}}_i^k$ and $\breve{\mathbf{u}}_i^k$ denote optimal state and input vectors computed by agent~$i$ at time-step $k$ along the prediction horizon, $\hat{\mathbf{x}}_{-i|J}^k$ and $\hat{\mathbf{u}}_{-i|J}^k$ contain state and input vectors that agent~$i$ assumes for all neighbors~$j \in \mathcal{N}_{-i|J}$ at some time-step $k$ along the prediction horizon, and $\breve{\mathbf{x}}_{-i|J}^k$ and $\breve{\mathbf{u}}_{-i|J}^k$ consist of optimal state and input vectors computed by all agents $j \in \mathcal{N}_{-i|J}$ at some time-step $k$ along the prediction horizon and communicated to agent~$i$. Given this information, while attempting to negotiate optimal state and input trajectories $\overline{\mathbf{x}}_i^* = \left( \breve{\mathbf{x}}_i^0, \breve{\mathbf{x}}_i^1, \cdots, \breve{\mathbf{x}}_i^H \right)$ and $\overline{\mathbf{u}}_i^* = \left( \breve{\mathbf{u}}_i^0, \breve{\mathbf{u}}_i^1, \cdots, \breve{\mathbf{u}}_i^{H - 1} \right)$, agent~$i$ operates in conflict with its neighbors if the following statement is true:
	\begin{equation}
		\breve{V}_i > \hat{V}_i.
	\end{equation}
\end{definition}

\begin{definition} \label{Definition - Conflict-free operation}
	(Conflict-free operation) Agent~$i$ operates free of conflict when its economic performance improves or remains unchanged upon considering the optimal control actions of its neighbors. More formally, consider once again the values $\hat{V}_i^s$, $\breve{V}_i^s$, $\hat{{V}}_i$, and $\breve{V}_i$ as previously defined. While attempting to negotiate an optimal stationary point $\left( \breve{\mathbf{x}}_i^s, \breve{\mathbf{u}}_i^s \right)$, agent~$i$ operates free of conflict with its neighbors if the following statement is true:
	\begin{equation} \label{Equation - Conflict-free stationary point negotiation}
		\breve{V}_i^s \leq \hat{V}_i^s.
	\end{equation}
	While attempting to negotiate optimal trajectories $\overline{\mathbf{x}}_i^*$ and $\overline{\mathbf{u}}_i^*$, agent~$i$ operates free of conflict with its neighbors if the following statement is true:
	\begin{equation} \label{Equation - Conflict-free trajectory negotiation}
		\breve{V}_i \leq \hat{V}_i.
	\end{equation}
\end{definition}

In the introduction, several algorithms based on sequential operation, agent negotiation, and agent grouping that could resolve non-convex conflict were discussed. The main disadvantages of these algorithms were lack of scalability, idleness of certain agents, and the requirement of predefined rules. In the next section, we propose a fully distributed and scalable solution to the problem of conflict that addresses these drawbacks by establishing social hierarchies.

\section{Social hierarchy-based DEMPC algorithm} \label{Section - DEMPC algorithm}

\subsection{Social hierarchy framework} \label{Subsection - Social hierarchy framework}

In the current context, a social hierarchy consists of a finite number of levels that establish the sequence in which agents generate decisions in order to resolve conflict. The concept is based loosely on social hierarchies that appear naturally among living organisms as a means to resolve conflict and establish which individuals' decisions take priority over those of others. These hierarchies are often determined by the evolutionary or cultural characteristics of the individuals, which are, in essence, randomly assigned. In a similar fashion, we propose a framework which permits the formation of hierarchies with elements of randomness in order to resolve conflict resulting from non-convexity.

For generality, we assume an iterative parallel coordination algorithm, the first of which was presented by Du \textit{et al.}~\cite{Du2001}. Agents synchronously solve their local optimization problems and communicate repeatedly within a single sampling time-step until some termination condition is satisfied or until a maximum number of iterations have been implemented. The socially hierarchy framework is equally applicable to non-iterative parallel methods which were first investigated by Jia and Krogh~\cite{Jia2001} and Dunbar and Murray~\cite{Dunbar2006}.

A visual representation of the social hierarchy framework is shown in Fig.~\ref{Figure - Pecking order framework}. Within a single iteration, there exist $N_q$ hierarchy levels which specify the order in which agents make decisions. Each agent may solve for its optimal stationary point and control trajectory only once within an iteration; however, this computation may take place within any hierarchy level. During each iteration, agents occupying hierarchy level $q = 1$ make decisions first and transfer relevant information to their neighbors. Following this step, agents allocated to hierarchy level $q = 2$ perform the same task. This trend continues until all hierarchy level computations have been performed, at which point the entire process is repeated during the next iteration. It is important to note that multiple agents may occupy the same hierarchy level, and that $N_q$ may be substantially smaller than $N$.

Two fundamental questions now arise regarding (i) how agents should sort themselves among the $N_q$ hierarchy levels in order to resolve conflict, and (ii) how should $N_q$ be determined by the control system designer. The former concern is addressed in Section~\ref{Subsection - DEMPC algorithm}, which describes and assesses a novel DEMPC coordination algorithm that utilizes the concept of a social hierarchy. The latter question is discussed in Section~\ref{Subsection - Social hierarchy properties}, which burrows elements from vertex coloring theory to establish social hierarchy properties.

\begin{figure}
	\centering
	\includegraphics[width=2.25in]{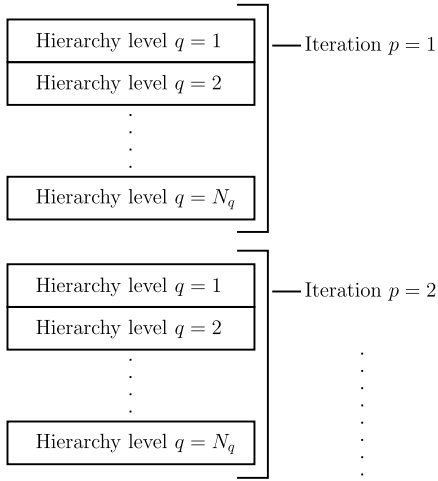}
	\caption{Schematic demonstrating the organization structure of the social hierarchy framework.} \label{Figure - Pecking order framework}
\end{figure}

\subsection{DEMPC coordination algorithm} \label{Subsection - DEMPC algorithm}

This subsection details a novel coordination algorithm for DEMPC with non-convex objectives based on the social hierarchy framework described in Section~\ref{Subsection - Social hierarchy framework}. Our approach allows agents to resolve their conflicts in a truly distributed and scalable manner without the requirement of predefined rules or access to the full system interaction topology. Each agent achieves this outcome by occupying an appropriate level along a social hierarchy when conflict arises. In brief, our algorithm follows evolutionary principles. That is, if an agent occupying a particular hierarchy level during some iteration experiences conflict, then its current hierarchy level is detrimental to its performance and must be randomly mutated.

While negotiating optimal trajectories $\overline{\mathbf{x}}_i^*$ and $\overline{\mathbf{u}}_i^*$, agent~$i$ solves the following optimization problem during each iteration:
\begin{equation} \label{Equation - DEMPC - Dynamic optimization problem}
	\min_{\overline{\mathbf{u}}_i, \overline{\mathbf{x}}_i} \sum_{k = 0}^{H - 1} J_i(\mathbf{x}_i^k, \mathbf{u}_i^k, \hat{\mathbf{x}}_{-i|J}^k, \hat{\mathbf{u}}_{-i|J}^k),
\end{equation}
subject to
\begin{subequations} \label{Equation - DEMPC - Dynamic optimization problem constraints}
	\begin{eqnarray}
		\mathbf{x}_i^0 & = & \mathbf{x}_i, \label{Equation - State initial condition} \\
		\mathbf{x}_i^{k + 1} & = & \mathbf{f}_i(\mathbf{x}_i^k, \mathbf{u}_i^k), \label{Equation - Dynamics constraint} \\
		\mathbf{x}_i^k & \in & \mathcal{X}_i, \label{Equation - State vector constraint} \\ 
		\mathbf{u}_i^k & \in & \mathcal{U}_i, \label{Equation - Input vector constraint} \\
		\mathbf{x}_i^H & = & \mathbf{x}_i^s, \label{Equation - Terminal state vector constraint}
	\end{eqnarray}
\end{subequations}
where $\mathbf{x}_i^k$ and $\mathbf{u}_i^k$ denote the candidate state and input vectors of agent~$i$ at some time-step $k$ along the prediction horizon $H$, $\overline{\mathbf{x}}_i$ and $\overline{\mathbf{u}}_i$ represent candidate state and input vector trajectories as follows:
\begin{eqnarray}
	\overline{\mathbf{x}}_i & \coloneqq & \left( \mathbf{x}_i^0, \mathbf{x}_i^1, \cdots, \mathbf{x}_i^H \right),\\
	\overline{\mathbf{u}}_i & \coloneqq & \left( \mathbf{u}_i^0, \mathbf{u}_i^1, \cdots, \mathbf{u}_i^{H - 1} \right),
\end{eqnarray}
and $\hat{\mathbf{x}}_{-i|J}^k$ and $\hat{\mathbf{u}}_{-i|J}^k$ contain the state and input vectors that agent~$i$ assumes for all neighbors~$j \in \mathcal{N}_{-i|J}$ at time-step $k$. For notational brevity in Algorithm~\ref{Algorithm - DEMPC coordination algorithm}, we condense the assumed states and input of agent~$j$ into the trajectories $\mathbf{\tilde{x}}_j$ and $\mathbf{\tilde{u}}_j$ as follows:
\begin{eqnarray}
	\mathbf{\tilde{x}}_j & \coloneqq & \left( \hat{\mathbf{x}}_j^0, \hat{\mathbf{x}}_j^1, \cdots, \hat{\mathbf{x}}_j^H \right),\\
	\mathbf{\tilde{u}}_j & \coloneqq & \left( \hat{\mathbf{u}}_j^0, \hat{\mathbf{u}}_j^1, \cdots, \hat{\mathbf{u}}_j^{H - 1} \right).
\end{eqnarray}

Solving Problem~(\ref{Equation - DEMPC - Dynamic optimization problem}) yields the optimal state and input vector trajectories $\overline{\mathbf{x}}_i^*$ and $\overline{\mathbf{u}}_i^*$ defined as follows:
\begin{eqnarray}
	\overline{\mathbf{x}}^*_i & \coloneqq & \left( \breve{\mathbf{x}}_i^0, \breve{\mathbf{x}}_i^1, \cdots, \breve{\mathbf{x}}_i^H \right),\\
	\overline{\mathbf{u}}^*_i & \coloneqq & \left( \breve{\mathbf{u}}_i^0, \breve{\mathbf{u}}_i^1, \cdots, \breve{\mathbf{u}}_i^{H - 1} \right),
\end{eqnarray}
where $\breve{\mathbf{x}}_i^k$ and $\breve{\mathbf{u}}_i^k$ denote the optimal state and input vectors at time-step $k$ along the prediction horizon. The difference between \textit{candidate} and \textit{optimal} solutions is that the optimal solutions $\overline{\mathbf{x}}_i^*$ and $\overline{\mathbf{u}}_i^*$ are not immediately \textit{accepted} by the DEMPC algorithm. In other words, if it was required that DEMPC algorithm implement a decision immediately, the enacted action would follow the candidate trajectories $\overline{\mathbf{x}}_i$ and $\overline{\mathbf{u}}_i$, not the optimal solutions $\overline{\mathbf{x}}_i^*$ and $\overline{\mathbf{u}}_i^*$. As detailed in Algorithm~\ref{Algorithm - DEMPC coordination algorithm}, $\overline{\mathbf{x}}^*_i$ and $\overline{\mathbf{u}}^*_i$ are only accepted as candidate solutions if they do not yield conflict. If $\overline{\mathbf{x}}^*_i$ and $\overline{\mathbf{u}}^*_i$ do yield conflict, then the previous values of $\overline{\mathbf{x}}_i$ and $\overline{\mathbf{u}}_i$ remain as the candidate trajectories to be implemented.

Constraint~(\ref{Equation - State initial condition}) serves as the initial condition of the prediction horizon by setting the candidate state vector of agent~$i$ at $k = 0$ equal to the most recent state measurement $\mathbf{x}_i$. Constraint~(\ref{Equation - Dynamics constraint}) requires that the optimal input and state trajectories $\overline{\mathbf{x}}^*_i$ and $\overline{\mathbf{u}}^*_i$ computed along the prediction horizon satisfy the plant dynamics of agent~$i$. Constraints~(\ref{Equation - State vector constraint}) and~(\ref{Equation - Input vector constraint}) state that the optimal input and state vectors $\breve{\mathbf{x}}_i^k$ and $\breve{\mathbf{u}}_i^k$ of agent~$i$ must remain within the process constraint sets $\mathcal{X}_i$ and $\mathcal{U}_i$, respectively, at any time-step $k$ along the prediction horizon. Finally, Constraint~(\ref{Equation - Terminal state vector constraint}) ensures that, by the end of the prediction horizon $H$, the computed optimal input trajectory $\overline{\mathbf{u}}_i^*$ leads the local state vector to a feasible candidate steady state $\mathbf{x}_i^s$.

To negotiate the candidate stationary state vector $\mathbf{x}_i^s$, agent~$i$ solves the following optimization problem during each iteration:
\begin{equation} \label{Equation - DEMPC - Stationary optimization problem}
	\min_{\overline{\mathbf{u}}_i, \mathbf{u}_i^s, \overline{\mathbf{x}}_i, \mathbf{x}_i^s} J_i(\mathbf{x}_i^s, \mathbf{u}_i^s, \hat{\mathbf{x}}_{-i|J}^s, \hat{\mathbf{u}}_{-i|J}^s),
\end{equation}
subject to
\begin{subequations} \label{Equation - DEMPC - Stationary optimization problem constraints}
	\begin{eqnarray}
		\mathbf{x}_i^0 & = & \mathbf{x}_i, \\
		\mathbf{x}_i^{k + 1} & = & \mathbf{f}_i(\mathbf{x}_i^k, \mathbf{u}_i^k), \forall k \in \mathbb{I}_{0:H - 1}, \\
		\mathbf{x}_i^s & = & \mathbf{f}_i(\mathbf{x}_i^s, \mathbf{u}_i^s), \label{Equation - Final stationary point constraint} \\
		\mathbf{x}_i^s & = & \mathbf{x}_i^{H}, \label{Equation - Final equality} \\
		\mathbf{x}_i^k & \in & \mathcal{X}_i, \forall k \in \mathbb{I}_{0:H}, \\ 
		\mathbf{u}_i^k & \in & \mathcal{U}_i, \forall k \in \mathbb{I}_{0:H - 1}, \\
		\left( \mathbf{x}_i^s, \mathbf{u}_i^s \right) & \in & \mathcal{X}_i \times \mathcal{U}_i, \label{Equation - DEMPC - Stationary point process constraint}
	\end{eqnarray}
\end{subequations}
where $\mathbf{x}_i^s$ and $\mathbf{u}_i^s$ denote the candidate stationary state and input vectors of agent~$i$, and $\hat{\mathbf{x}}_{-i|J}^s$ and $\hat{\mathbf{u}}_{-i|J}^s$ contain similar information that agent~$i$ assumes for all neighbors~$j \in \mathcal{N}_{-i|J}$. The steady state and input vectors that agent~$i$ assumes for an individual neighbor~$j$ are denoted by $\hat{\mathbf{x}}_j^s$ and $\hat{\mathbf{u}}_j^s$, respectively. Solving Problem~(\ref{Equation - DEMPC - Stationary optimization problem}) yields the optimal steady state and input vectors $\breve{\mathbf{x}}_i^s$ and $\breve{\mathbf{u}}_i^s$. Similar to the optimal trajectories $\overline{\mathbf{x}}^*_i$ and $\overline{\mathbf{u}}^*_i$, the optimal stationary solution is only accepted if it yields conflict-free operation. Otherwise, the previous values of $\mathbf{x}_i^s$ and $\mathbf{u}_i^s$ remain as the candidate stationary terminal set-point.

Constraint~(\ref{Equation - Final stationary point constraint}) ensures that the computed optimum $\left( \breve{\mathbf{x}}_i^s, \breve{\mathbf{u}}_i^s \right)$ is a stationary point. Constraint~(\ref{Equation - Final equality}) then requires that the computed state vector trajectory ends at the optimal steady state $\breve{\mathbf{x}}_i^s$. Finally, Constraint~(\ref{Equation - DEMPC - Stationary point process constraint}) states that the computed optimal stationary point $\left( \breve{\mathbf{x}}_i^s, \breve{\mathbf{u}}_i^s \right)$ must lie within the process constraint set $\mathcal{X}_i \times \mathcal{U}_i$.

The difference between Problems~(\ref{Equation - DEMPC - Dynamic optimization problem}) and~(\ref{Equation - DEMPC - Stationary optimization problem}) is that the latter only considers the stage cost function $J_i(\cdot)$ and therefore computes a feasible optimal steady state vector $\breve{\mathbf{x}}_i^s$ without minimizing $J_i(\cdot)$ over the prediction horizon. The reason that dynamics are considered in Problem~(\ref{Equation - DEMPC - Stationary optimization problem}) is to ensure that the computed optimal steady state vector $\breve{\mathbf{x}}_i^s$ is reachable from the initial state vector $\mathbf{x}_i^0$. Problem~(\ref{Equation - DEMPC - Dynamic optimization problem}) then computes optimal trajectories that minimize $J_i(\cdot)$ over the prediction horizon and steer the system to the reachable candidate steady state vector $\mathbf{x}_i^s$. A description of the proposed social hierarchy-based DEMPC algorithm now follows.

\begin{algorithm} \label{Algorithm - DEMPC coordination algorithm}
	Social hierarchy-based DEMPC coordination scheme. Implement in parallel for all agents $i \in \mathcal{I}$.
\end{algorithm}
\noindent \textit{Communication protocol}:
\begin{itemize}[leftmargin=*]
	\item Send $\overline{\mathbf{x}}_i$, $\overline{\mathbf{u}}_i$, $\mathbf{x}_i^s$, and $\mathbf{u}_i^s$ to all agents~$j \in \mathcal{N}_{+i|J}$, receive $\overline{\mathbf{x}}_j$, $\overline{\mathbf{u}}_j$, $\mathbf{x}_j^s$, and $\mathbf{u}_j^s$ from all agents~$j \in \mathcal{N}_{-i|J}$, and set $\tilde{\mathbf{x}}_j = \overline{\mathbf{x}}_j$, $\tilde{\mathbf{u}}_j = \overline{\mathbf{u}}_j$, $\hat{\mathbf{x}}_j^s = \mathbf{x}_j^s$, and $\hat{\mathbf{u}}_j^s = \mathbf{u}_j^s$ for all $j \in \mathcal{N}_{-i|J}$.
\end{itemize}
\noindent \textit{Initialization}:
\begin{enumerate}[leftmargin=*]
	\item Specify $N_q \geq 1$ and set $q_i  = 1$.
	\item Initialize $\mathbf{x}_i$, $\overline{\mathbf{x}}_i \in \overline{\mathcal{X}}_i = \mathcal{X}_i \times \cdots \times \mathcal{X}_i = \mathcal{X}_i^H$, $\overline{\mathbf{u}}_i \in \overline{\mathcal{U}}_i$, $\mathbf{x}_i^s \in \mathcal{X}_i$, $\mathbf{u}_i^s \in \mathcal{U}_i$ such that $\mathbf{x}_i^0 = \mathbf{x}_i$, $\mathbf{x}_i^{k + 1} = \mathbf{f}_i(\mathbf{x}_i^k, \mathbf{u}_i^k) \forall k \in \mathbb{I}_{0:H - 1}$, $\mathbf{x}_i^H = \mathbf{x}_i^s$, and $\mathbf{x}_i^s = \mathbf{f}_i(\mathbf{x}_i^s, \mathbf{u}_i^s)$.
	\item Implement \textit{communication protocol}.
\end{enumerate}
\noindent \textit{Perform at each new time-step}:
\begin{enumerate}[leftmargin=*]
	\item Measure $\mathbf{x}_i$, and compute $\overline{\mathbf{x}}_i$ such that $\mathbf{x}_i^0 = \mathbf{x}_i$ and $\mathbf{x}_i^{k + 1} = \mathbf{f}_i(\mathbf{x}_i^k, \mathbf{u}_i^k) \forall k \in \mathbb{I}_{0:H - 1}$.
	\item Implement \textit{communication protocol}.
	\item For iteration number $p = 1, 2, \cdots, N_p$, do:
	\begin{enumerate}[leftmargin=0.15cm]
		\item For sequence slot number $q = 1, 2, \cdots, N_q$, do:
		\begin{enumerate}[leftmargin=0.15cm]
			\item If $q = q_i$, (i) solve Problem~(\ref{Equation - DEMPC - Stationary optimization problem}) to acquire $\breve{\mathbf{x}}_i^s$ and $\breve{\mathbf{u}}_i^s$, (ii) compute $\hat{V}_i^s$ according to Eq.~(\ref{Equation - Naive optimal stage cost function}), (iv) send $\breve{\mathbf{x}}_i^s$ and $\breve{\mathbf{u}}_i^s$ to all agents~$j \in \mathcal{N}_{+i|J}$, (v) receive $\breve{\mathbf{x}}_j^s$ and $\breve{\mathbf{u}}_j^s$ from all agents~$j \in \mathcal{N}_{-i|J}$, and update $\hat{\mathbf{x}}_j^s = \breve{\mathbf{x}}_j^s$ and $\hat{\mathbf{x}}_j^s = \breve{\mathbf{x}}_j^s$.
			\item Else, receive $\breve{\mathbf{x}}_j^s$ and $\breve{\mathbf{u}}_j^s$ from all agents~$j \in \mathcal{N}_{-i|J}$, and update $\hat{\mathbf{x}}_j^s = \breve{\mathbf{x}}_j^s$ and $\hat{\mathbf{x}}_j^s = \breve{\mathbf{x}}_j^s$.
		\end{enumerate}
		\item Compute $\breve{V}_i^s$ according to Eq.~(\ref{Equation - Informed optimal stage cost function}). If $\breve{V}_i^s > \hat{V}_i^s$, randomly change $q_i$ with uniform probability and set $\breve{\mathbf{x}}_i^s = \mathbf{x}_i^s$ and $\breve{\mathbf{u}}_i^s = \mathbf{u}_i^s$. Else, if $\breve{V}_i^s \leq \hat{V}_i^s$, update $\mathbf{x}_i^s = \breve{\mathbf{x}}_i^s$ and $\mathbf{u}_i^s = \breve{\mathbf{u}}_i^s$.
		\item Implement \textit{communication protocol}.
	\end{enumerate}
	\item For iteration number $p = 1, 2, \cdots, N_p$, do:
	\begin{enumerate}[leftmargin=0.15cm]
		\item For sequence slot number $q = 1, 2, \cdots, N_q$, do:
		\begin{enumerate}[leftmargin=0.15cm]
			\item If $q = q_i$, (i) solve Problem~(\ref{Equation - DEMPC - Dynamic optimization problem}) to acquire $\overline{\mathbf{x}}_i^*$ and $\overline{\mathbf{u}}_i^*$, (ii) compute $\hat{V}_i$ according to Eq.~(\ref{Equation - Informed optimal cost function}), (iv) send $\overline{\mathbf{x}}_i^*$ and $\overline{\mathbf{u}}_i^*$ to all agents~$j \in \mathcal{N}_{+i|J}$, (v) receive $\overline{\mathbf{x}}_j^*$ and $\overline{\mathbf{u}}_j^*$ from all agents~$j \in \mathcal{N}_{-i|J}$, and update $\tilde{\mathbf{x}}_j = \overline{\mathbf{x}}_j^*$ and $\tilde{\mathbf{u}}_j = \overline{\mathbf{u}}_j^*$.
			\item Else, receive $\overline{\mathbf{x}}_j^*$ and $\overline{\mathbf{u}}_j^*$ from all agents~$j \in \mathcal{N}_{-i|J}$, and update $\tilde{\mathbf{x}}_j = \overline{\mathbf{x}}_j^*$ and $\tilde{\mathbf{u}}_j = \overline{\mathbf{u}}_j^*$.
		\end{enumerate}
		\item Compute $\breve{V}_i$ according to Eq.~(\ref{Equation - Informed optimal cost function}). If $\breve{V}_i > \hat{V}_i$, randomly change $q_i$ with uniform probability and set $\overline{\mathbf{x}}_i^* = \overline{\mathbf{x}}_i$ and $\overline{\mathbf{u}}_i^* = \overline{\mathbf{u}}_i$. Else, if $\breve{V}_i \leq \hat{V}_i$, update $\overline{\mathbf{x}}_i = \overline{\mathbf{x}}_i^*$ and $\overline{\mathbf{u}}_i = \overline{\mathbf{u}}_i^*$.
		\item Implement \textit{communication protocol}.
	\end{enumerate}
	\item Apply candidate control input $\mathbf{u}_i^0$ to the system, update $\overline{\mathbf{u}}_i$ such that $\mathbf{u}_i^k = \mathbf{u}_i^{k + 1} \forall k \in \mathbb{I}_{0:H - 2}$ and $\mathbf{u}_i^{H - 1} = \mathbf{u}_i^s$.
\end{enumerate}

During initialization, step~1 requires the control system designer to specify the quantity $N_q$ of hierarchy levels and to allocate agent~$i$ to the first level. As a result, all agents initially solve their local optimization problems in parallel. Step~2 requires that feasible state and input trajectories $\overline{\mathbf{x}}_i$ and $\overline{\mathbf{u}}_i$ that satisfy the constraints of Eq.~(\ref{Equation - DEMPC - Dynamic optimization problem constraints}), and steer the system to a reachable stationary point $\left( \mathbf{x}_i^s, \mathbf{u}_i^s \right)$, be specified for agent~$i$ given the initial state vector $\mathbf{x}_i$. Finally, step~3 involves the exchange of candidate trajectories $\overline{\mathbf{x}}_j$ and $\overline{\mathbf{u}}_j$ and stationary vectors $\mathbf{x}_j^s$ and $\mathbf{u}_j^s$ between agent~$i$ and all of its neighbors~$j \in \mathcal{N}_{-i|J}$. Agent~$i$ then uses this incoming information to establish the assumed values of its neighbors' states and inputs $\tilde{\mathbf{x}}_j$, $\tilde{\mathbf{u}}_j$, $\hat{\mathbf{x}}_j^s$, and $\hat{\mathbf{u}}_j^s$ for all $j \in \mathcal{N}_{-i|J}$. These assumptions remain unchanged until future communication.

Focusing on the recursive portion of the DEMPC algorithm, agent~$i$ first measures its state vector $\mathbf{x}_i$ and updates its local state trajectory $\overline{\mathbf{x}}_i$ using the most up-to-date candidate input sequence $\overline{\mathbf{u}}_i$. In step~2, this updated information is communicated between neighbors. As a result, prior to step~3, agent~$i$ possesses the most up-to-date trajectories and stationary vectors $\overline{\mathbf{x}}_j$,$\overline{\mathbf{u}}_j$, $\mathbf{x}_j^s$, and $\mathbf{u}_j^s$ of neighbors~$j \in \mathcal{N}_{-i|J}$. The assumptions $\tilde{\mathbf{x}}_j$, $\tilde{\mathbf{u}}_j$, $\hat{\mathbf{x}}_j^s$, and $\hat{\mathbf{u}}_j^s$ for all $j \in \mathcal{N}_{-i|J}$ are also updated as a result.

Step~3 initiates an iterative process within the current sampling time-step with the objective of identifying an appropriate candidate stationary point $\left( \mathbf{x}_i^s, \mathbf{u}_i^s \right)$ to serve as a terminal constraint in future steps. In step~3(a), agent~$i$ cycles through all hierarchy levels sequentially and, during its allocated hierarchy level, solves Problem~(\ref{Equation - DEMPC - Stationary optimization problem}) to obtain an optimal stationary point $\left( \breve{\mathbf{x}}_i^s, \breve{\mathbf{u}}_i^s \right)$. These vectors, along with the assumptions $\hat{\mathbf{x}}_j^s$ and $\hat{\mathbf{u}}_j^s$ for all $j \in \mathcal{N}_{-i|J}$, are then used to calculate the naive stage cost function value $\hat{V}_i^s$ to be used for comparison to $\breve{V}_i^s$ later on. Finally, agent~$i$ exchanges optimal stationary vector information with neighboring agents and updates its assumptions $\hat{\mathbf{x}}_j^s$ and $\hat{\mathbf{u}}_j^s$ for all $j \in \mathcal{N}_{-i|J}$. Outside of its hierarchy level, agent~$i$ remains idle and only receives information from neighbors and updates its assumptions.

After all hierarchy levels have been cycled through, agent~$i$ will have received $\breve{\mathbf{x}}_j^s$ and $\breve{\mathbf{u}}_j^s$ from all neighbors~$j \in \mathcal{N}_{-i|J}$. Step~3(b) requires the computation of the informed stage cost function value $\breve{V}_i^s$. If $\breve{V}_i^s > \hat{V}_i^s$, then agent~$i$ is in conflict with its neighbors while attempting to identify an optimal stationary point, and it is necessary for its hierarchy level $q_i$ to be randomly mutated. Further, if conflict is encountered, then agent~$i$ resets it optimal stationary vectors $\breve{\mathbf{x}}_i^s$ and $\breve{\mathbf{u}}_i^s$ to the candidate stationary vectors $\mathbf{x}_i^s$ and $\mathbf{u}_i^s$. This reset essentially erases $\breve{\mathbf{x}}_i^s$ and $\breve{\mathbf{u}}_i^s$ so that conflict-yielding optimal solutions are no longer communicated. If $\breve{V}_i^s \leq \hat{V}_i^s$, then agent~$i$ is operating free of conflict, and its current hierarchy level may be maintained. Further, since the recently computed stationary vectors $\breve{\mathbf{x}}_i^s$ and $\breve{\mathbf{u}}_i^s$ did not yield conflict, they may replace the candidate stationary vectors $\mathbf{x}_i^s$ and $\mathbf{u}_i^s$. As a result, agent~$i$ updates $\mathbf{x}_i^s$ and $\mathbf{u}_i^s$ using $\breve{\mathbf{x}}_i^s$ and $\breve{\mathbf{u}}_i^s$.

Step~3 is terminated once the maximum number of iterations $N_p$ has been reached. Step~4 essentially repeats step~3, except the candidate terminal state vector $\mathbf{x}_i^s$ has now been updated, and the objective is to update candidate state and input trajectories $\overline{\mathbf{x}}_i$ and $\overline{\mathbf{u}}_i$. The maximum number of iterations available for this step is also $N_p$. Step~5 requires that the control input vector corresponding to the first time-step along the prediction horizon be applied to the system. Additionally, the candidate input trajectory $\overline{\mathbf{u}}_i$ for the next sampling time-step is constructed by concatenating it with the candidate stationary input vector $\mathbf{u}_i^s$.

\subsection{Closed-loop properties}

The current subsection establishes closed-loop properties for Algorithm~\ref{Algorithm - DEMPC coordination algorithm}. We first demonstrate that, regardless of the interaction topology of a multi-agent system, all conflicts may be resolved in a finite number of iterations with some probability. We then address convergence, feasibility, and stability.

\begin{theorem} \label{Theorem - Conflict resolution}
	(Conflict resolution) There exist a finite number of iterations after which, with some probability greater than zero, Inequalities~(\ref{Equation - Conflict-free stationary point negotiation}) and~(\ref{Equation - Conflict-free trajectory negotiation}) are guaranteed to be satisfied at each iteration within a single time-step.
\end{theorem}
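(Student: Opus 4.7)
The plan is to prove the theorem in three stages: exhibit at least one hierarchy configuration $Q^\ast = (q_1^\ast, \ldots, q_N^\ast) \in \{1, \ldots, N_q\}^N$ that yields conflict-free operation; verify that the algorithm's random mutation rule reaches $Q^\ast$ from any starting configuration with positive probability; and conclude by noting that once $Q^\ast$ is reached, it is an absorbing state so the inequalities continue to hold at every subsequent iteration of the time-step.

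First, I would establish existence of $Q^\ast$. The direct construction is sequential: if $N_q$ is large enough to admit an assignment in which no two interacting agents share a level, with the extreme case $N_q \geq N$ and all $q_i^\ast$ distinct always being available, then within each iteration agents optimize in a strict order dictated by the hierarchy. One then argues by induction on $q$, descending from the highest level---whose agents have no successors within the iteration and therefore trivially satisfy $\breve{V}_i^s = \hat{V}_i^s$---that the neighborhood-cooperative structure imposed on $J_i$ by Assumption~\ref{Assumption - Neighborhood cooperative objectives} propagates conflict-freeness down the hierarchy.

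Second, I would observe that the configuration space $\{1, \ldots, N_q\}^N$ is finite and that step~3(b) (and analogously step~4(b) of Algorithm~\ref{Algorithm - DEMPC coordination algorithm}) induces a Markov chain on it in which any agent experiencing conflict resamples its $q_i$ uniformly on $\{1, \ldots, N_q\}$. Hence, starting from any configuration that differs from $Q^\ast$, the single-iteration transition probability into $Q^\ast$ is bounded below by $(1/N_q)^N > 0$, so after a single (hence finite) iteration $Q^\ast$ is attained with positive probability. Once at $Q^\ast$, no agent experiences conflict and no agent mutates, so the configuration persists and inequalities~(\ref{Equation - Conflict-free stationary point negotiation}) and~(\ref{Equation - Conflict-free trajectory negotiation}) are satisfied at that and all remaining iterations of the current time-step.

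The main obstacle is the rigorous verification that a sequential configuration $Q^\ast$ truly produces $\breve{V}_i^s \leq \hat{V}_i^s$ for every agent $i$, and not merely for those at the highest hierarchy level. The downward induction must track how the successors' updates to $\breve{\mathbf{u}}_j^s$ propagate through the cooperative cost $J_i = \ell_i + \sum_{j \in \mathcal{N}_{+i}} \ell_j$; a naive argument handles only the topmost level, and extending it requires Assumption~\ref{Assumption - Neighborhood cooperative objectives} to guarantee that whenever a successor $j$ minimizes $J_j$ its decision respects the local costs felt upstream by $i$. An analogous difficulty arises for the trajectory inequality~(\ref{Equation - Conflict-free trajectory negotiation}), but once the stationary case is handled the same induction transfers to step~4 of Algorithm~\ref{Algorithm - DEMPC coordination algorithm} essentially unchanged.
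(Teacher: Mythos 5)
You organize the argument the same way the paper does (existence of a conflict-resolving hierarchy, positive probability that random mutation reaches one, persistence once reached), but the first step---which is the substance of the theorem---is left unproven. You only observe that agents deciding last trivially have $\breve{V}_i^s = \hat{V}_i^s$ and then defer the ``downward induction'' as the main obstacle; that obstacle is exactly what the paper's Part~I resolves, and without it Stage~1 is an assertion. The paper's mechanism is the following: write the global cooperative cost as $J = \ell_i + \sum_{j \in \mathcal{N}_{+i}} \ell_j + \sum_{\kappa} \ell_\kappa$ with the last sum independent of agent~$i$, so that $\nabla_{\mathbf{x}_i,\mathbf{u}_i} J = \nabla_{\mathbf{x}_i,\mathbf{u}_i} J_i$. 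Hence each agent's minimization of its neighborhood-cooperative cost $J_i$ is a block-coordinate descent step on one common function $V^s$, and agents not coupled through $\mathcal{N}_{-i|J}$ may take such steps simultaneously without altering each other's projected gradients. Consequently, for any level assignment in which no two coupled agents share a level, $V^s$ is non-increasing at every update within the iteration, and monotonicity of this shared objective is precisely what delivers Inequality~(\ref{Equation - Conflict-free stationary point negotiation}) for \emph{every} agent, not only the last-deciding ones: when a successor $j \in \mathcal{N}_{+i}$ re-optimizes $J_j$, the reason this cannot degrade agent~$i$'s performance is that, by Assumption~\ref{Assumption - Neighborhood cooperative objectives}, $j$'s step descends the same global function whose restriction agent~$i$ cares about. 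Your sketch never supplies this (or an equivalent) reason, so the induction you gesture at does not close as written; the same gap then carries over to Inequality~(\ref{Equation - Conflict-free trajectory negotiation}).

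Two secondary points. First, your bound that from \emph{any} configuration the one-iteration probability of hitting a fixed $Q^\ast$ is at least $(1/N_q)^N$ is not valid under the mutation rule of Algorithm~\ref{Algorithm - DEMPC coordination algorithm}: only agents currently in conflict resample $q_i$, so a conflict-free agent sitting at a level different from $q_i^\ast$ never moves there. The paper instead bounds the worst case in which all agents are in conflict (giving $P = N_s (1/N_q)^N$) and only requires that \emph{some} conflict-resolving hierarchy be reached, not a prescribed one; your absorbing-state observation is fine and matches the paper's universal social hierarchy. Second, existence of $Q^\ast$ requires $N_q$ to be at least the chromatic number of the coupling graph; invoking ``the extreme case $N_q \geq N$'' assumes a designer choice the theorem does not state, so this hypothesis on $N_q$ should be made explicit rather than smuggled in.
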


\begin{proof}
	This proof consists of two parts; (i) it is first necessary to prove that, for any interconnected system, at least one social hierarchy exists that will ensure system-wide conflict resolution; (ii) it is then proved that the probability of agents self-organizing according to such a social hierarchy is greater than zero during any iteration.
	
	\noindent \textbf{Part I:} Consider a multi-agent system wherein all agents possess the same stage cost function $J(\cdot)$ defined as follows:
	\begin{equation}
		J(\mathbf{x},\mathbf{u}) \coloneqq \sum_{i \in \mathcal{I}} \ell_i(\mathbf{x}_i, \mathbf{u}_i, \mathbf{x}_{-i}, \mathbf{u}_{-i}),
	\end{equation}
	where $\mathbf{x} \in \mathbb{R}^{\sum_{i \in \mathcal{I}} n_i}$ and $\mathbf{u} \in \mathbb{R}^{\sum_{i \in \mathcal{I}} m_i}$ contain the $n_i$ states and $m_i$ inputs of all agents~$i \in \mathcal{I}$, and $J(\cdot)$ is global-cooperative in that it considers the local interests $\ell_i(\cdot)$ of all agents~$i \in \mathcal{I}$. The resulting dynamic optimization problem over the prediction horizon $H$ for any agent~$i$ is therefore defined as follows:
	\begin{equation} \label{Equation - Global dynamic optimization problem}
		\min_{\overline{\mathbf{u}}_i, \overline{\mathbf{x}}_i} \sum_{k = 0}^{H - 1} J(\mathbf{x}^k, \mathbf{u}^k),
	\end{equation}
	with constraints similar to those of Problem~(\ref{Equation - DEMPC - Dynamic optimization problem}). The vectors $\mathbf{x}^k$ and $\mathbf{u}^k$ denote system-wide states and inputs at time-step $k$ along the prediction horizon. The stationary optimization problem for any agent~$i$ is defined as follows:
	\begin{equation} \label{Equation - Global static optimization problem}
		\min_{\overline{\mathbf{u}}_i, \mathbf{u}_i^s, \overline{\mathbf{x}}_i, \mathbf{x}_i^s} J(\mathbf{x}^s, \mathbf{u}^s),
	\end{equation}
	with constraints similar to those of Problem~(\ref{Equation - DEMPC - Stationary optimization problem}). The vectors $\mathbf{x}^s$ and $\mathbf{u}^s$ denote system-wide steady states and inputs. We also define $V(\cdot)$ and $V^s(\cdot)$ as follows:
	\begin{eqnarray}
		V(\overline{\mathbf{x}}, \overline{\mathbf{u}}) & \coloneqq & \sum_{k = 0}^{H - 1} J(\mathbf{x}^k, \mathbf{u}^k),\\
		V^s(\mathbf{x}^s, \mathbf{u}^s) & \coloneqq & J(\mathbf{x}^s, \mathbf{u}^s),
	\end{eqnarray}
	where $\overline{\mathbf{x}} = \left( \mathbf{x}^0, \mathbf{x}^1, \cdots, \mathbf{x}^H \right)$ and $\overline{\mathbf{u}} = \left( \mathbf{u}^0, \mathbf{u}^1, \cdots, \mathbf{x}^{H - 1} \right)$ denote system-wide state and input trajectories over the prediction horizon.
	
	As a reference case, let the agents operate in a fully sequential manner such that, at any given iteration, only one agent solves its local optimization problem. Focusing on the negotiation of a stationary point for now, only agent~$i$ updates $\mathbf{x}_i^s$ and $\mathbf{u}_i^s$ at some iteration $p$, then transmits this information to all other agents. Furthermore, let recursive feasibility, which is established independently in Theorem~\ref{Theorem - Recursive feasibility}, be presumed, and let Assumptions~\ref{Assumption - Lipschitz dynamics}, \ref{Assumption - Bounded cost function minima}, and~\ref{Assumption - Lipschitz cost functions} concerning continuity and bounded minima hold. Under these conditions, upon solving Problem~(\ref{Equation - Global static optimization problem}), each agent~$i \in \mathcal{I}$ is guaranteed to shift $\mathbf{x}^s$ and $\mathbf{u}^s$ until the gradient of $J(\cdot)$ projected along the variable space $\left( \mathbf{x}_i^s, \mathbf{u}_i^s \right)$ satisfies optimality conditions. Since this process occurs sequentially across all agents~$i \in \mathcal{I}$, $V^s(\cdot)$ is guaranteed to decrease or remain unchanged after each subsequent agent's update to $\mathbf{x}_i^s$ and $\mathbf{u}_i^s$, which ensures conflict-free operation as per Definition~\ref{Definition - Conflict-free operation}.
	
	We now prove that the above result may be achieved using social hierarchies that are not necessarily fully sequential (\textit{i.e.} a subset of agents may solve their local optimization problems simultaneously) and also assuming neighborhood-cooperative stage cost functions as per Assumption~\ref{Assumption - Neighborhood cooperative objectives}. From the perspective of any agent~$i \in \mathcal{I}$, $J(\cdot)$ may be arranged as follows:
	\begin{equation} \label{Equation - Rearranged stage cost}
		J(\mathbf{x},\mathbf{u}) = \ell_i(\mathbf{x}_i, \mathbf{u}_i, \mathbf{x}_{-i}, \mathbf{u}_{-i}) + \sum_{j \in \mathcal{N}_{+i}} \ell_j(\mathbf{x}_j, \mathbf{u}_j, \mathbf{x}_{-j}, \mathbf{u}_{-j}) + \sum_{\kappa \in \mathcal{I} \setminus \mathcal{N}_{+i}, \kappa \neq i} \ell_\kappa(\mathbf{x}_\kappa, \mathbf{u}_\kappa, \mathbf{x}_{-\kappa}, \mathbf{u}_{-\kappa}),
	\end{equation}
	where the three terms on the right-hand-side represent, from left to right, the local interests of agent~$i$, the local interests of agents whose cost functions are influenced by agent~$i$, and the local interests of all remaining agents whose cost functions are uninfluenced by agent~$i$. Taking the gradient of $J(\cdot)$ along the variable space $\left( \mathbf{x}_i, \mathbf{u}_i \right)$ yields the following:
	\begin{equation}
		\nabla_{\mathbf{x}_i, \mathbf{u}_i}J(\mathbf{x},\mathbf{u}) = \nabla_{\mathbf{x}_i, \mathbf{u}_i}\ell_i(\mathbf{x}_i, \mathbf{u}_i, \mathbf{x}_{-i}, \mathbf{u}_{-i}) + \nabla_{\mathbf{x}_i, \mathbf{u}_i}\sum_{j \in \mathcal{N}_{+i}} \ell_j(\mathbf{x}_j, \mathbf{u}_j, \mathbf{x}_{-j}, \mathbf{u}_{-j}).
	\end{equation}
	Note that gradient of $\ell_\kappa(\cdot)$ for all $\kappa \in \mathcal{I} \setminus \mathcal{N}_{+i}, \kappa \neq i$ is zero along $\left( \mathbf{x}_i, \mathbf{u}_i \right)$ since these expressions have no dependency upon the operation of agent~$i$. The shape of $J(\cdot)$ along the variable space $\left( \mathbf{x}_i, \mathbf{u}_i \right)$ is therefore only influenced by $\mathbf{x}_j$ and $\mathbf{u}_j$ for all $j \in \mathcal{N}_{-i|J}$, since $\mathcal{N}_{-i|J} = \mathcal{N}_{-i} \cup \mathcal{N}_{+i} \cup \mathcal{N}_{-j} \forall j \in \mathcal{N}_{+i}$. This result yields two important consequences.
	
	First, the local optimization problem of agent~$i$ is uninfluenced by the control actions of agents~$\kappa \in \mathcal{I} \setminus \mathcal{N}_{-i|J}, \kappa \neq i$. Agent~$i$ may therefore update its control actions in parallel with agents~$\kappa \in \mathcal{I} \setminus \mathcal{N}_{-i|J}, \kappa \neq i$, and the computed optimal trajectories are guaranteed to decrease or preserve $V^s(\cdot)$. Second, whether agent~$i$ employs the global stage cost function $J(\cdot)$, or the neighborhood-cooperative stage cost function $J_i(\cdot)$ defined in Eq.~(\ref{Equation - Cooperative cost function}), the computed optimal trajectories remain unchanged. This property is true since $\nabla_{\mathbf{x}_i, \mathbf{u}_i}J(\cdot) = \nabla_{\mathbf{x}_i, \mathbf{u}_i}J_i(\cdot)$. Therefore, if Assumption~\ref{Assumption - Neighborhood cooperative objectives} concerning neighborhood-cooperative cost functions holds, and if no two agents~$i$ and~$j$ such that $j \in \mathcal{N}_{-i|J}$ for all $i, j \in \mathcal{I}, j \neq i$ solve their local optimization problems simultaneously, then at least one social hierarchy exists that will satisfy Inequality~(\ref{Equation - Conflict-free stationary point negotiation}) after each iteration. The above logic may be extended to step~4 in Algorithm~\ref{Algorithm - DEMPC coordination algorithm} without modification. Part~I of the proof is thus completed.
	
	\noindent \textbf{Part II:} Due to its distributed nature, Algorithm~\ref{Algorithm - DEMPC coordination algorithm} may lead some agents to resolve their conflicts earlier than others. However, we prove that, even in a worst-case scenario in which all agents initially encounter conflict, the probability that all conflicts will be resolved within a single iteration is greater than zero. Let $N_s$ describe the quantity of possible social hierarchies that will resolve conflict among $N$ agents. If agents must, with uniform probability, randomly choose among $N_q$ hierarchy levels, the probability $P$ that all conflicts are resolved within a single iteration is defined as follows:
	\begin{equation}
		P = N_s \left( \frac{1}{N_q} \right)^{N}.
	\end{equation}
	Thus, as long as $N_s > 0$, $P > 0$ must also be true. Part~I of this proof demonstrated that $N_s \geq 1$ for any interconnected system with neighborhood-cooperative cost functions.
\end{proof}

\begin{theorem} \label{Theorem - Convergence}
	(Convergence) Let $\mathbf{x}_i^{s,p}$ and $\overline{\mathbf{u}}_i^p$ denote the candidate steady state vector and input trajectory $\mathbf{x}_i^s$ and $\overline{\mathbf{u}}_i$ held by agent~$i$ at iteration $p$. Given a sufficiently large number of iterations in Steps~3 and~4 in Algorithm~\ref{Algorithm - DEMPC coordination algorithm}, the following inequalities are guaranteed to be satisfied for all $i \in \mathcal{I}$:
	\begin{eqnarray}
		\left\Vert \mathbf{x}_i^{s,p} - \mathbf{x}_i^{s,p - 1} \right\Vert & \leq & \alpha \left\Vert \mathbf{x}_i^{s,p - 1} \right\Vert, \label{Equation - Stationary state vector convergence} \\
		\left\Vert \overline{\mathbf{u}}_i^p - \overline{\mathbf{u}}_i^{p - 1} \right\Vert & \leq & \beta \left\Vert \overline{\mathbf{u}}_i^{p - 1} \right\Vert, \label{Equation - Input trajectory convergence}
	\end{eqnarray}
	where $\alpha > 0$ and $\beta > 0$ represent fixed convergence tolerances.
\end{theorem}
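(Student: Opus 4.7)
The plan is to prove Theorem~\ref{Theorem - Convergence} in three stages: (i) reduce to the conflict-free regime using Theorem~\ref{Theorem - Conflict resolution}, (ii) establish monotone convergence of the aggregate cost, and (iii) translate cost convergence into iterate convergence. First I would invoke Theorem~\ref{Theorem - Conflict resolution} to argue that, after finitely many iterations with strictly positive probability, the agents have mutated their hierarchy levels $q_i$ into an assignment that resolves all conflicts, and the random-mutation step in Algorithm~\ref{Algorithm - DEMPC coordination algorithm} is never triggered thereafter. From that iteration onward, the acceptance rules in Steps~3(b) and~4(b) ensure $\breve{V}_i^s \leq \hat{V}_i^s$ and $\breve{V}_i \leq \hat{V}_i$ for every $i$; moreover, by the gradient decomposition argument used in Part~I of the proof of Theorem~\ref{Theorem - Conflict resolution}, each accepted local update weakly decreases the aggregate stationary cost $V^s$ and the aggregate horizon cost $V$ defined there.

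Next I would combine monotonicity with Assumption~\ref{Assumption - Bounded cost function minima} to obtain uniform boundedness from below of $V^s$ and $V$. A monotonically non-increasing sequence that is bounded below converges, so the consecutive decrements $V^{s,p-1}-V^{s,p}$ and $V^{p-1}-V^{p}$ both tend to zero. This is the quantitative version of ``the iterates stop making progress.''

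Step three is to translate the vanishing cost decrement into a vanishing iterate change. The iterates $(\mathbf{x}_i^{s,p},\mathbf{u}_i^{s,p})$ and $\overline{\mathbf{u}}_i^{p}$ live in the compact feasibility sets $\mathcal{X}_i \times \mathcal{U}_i$ and $\overline{\mathcal{U}}_i$, so any subsequence has an accumulation point. Using the Lipschitz continuity of $\ell_i$ (Assumption~\ref{Assumption - Lipschitz cost functions}) and of $\mathbf{f}_i$ (Assumption~\ref{Assumption - Lipschitz dynamics}), together with the fact that conflict-free updates are fixed-data minimisations of Problems~(\ref{Equation - DEMPC - Stationary optimization problem}) and~(\ref{Equation - DEMPC - Dynamic optimization problem}), every such accumulation point must be a fixed point of the one-step update map: otherwise a single invocation of the local solver would yield a strictly smaller cost than $V^{s,p-1}$, contradicting $V^{s,p-1}-V^{s,p}\to 0$. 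Because the update map is continuous in its fixed-neighbour data, the full sequence converges, hence $\|\mathbf{x}_i^{s,p}-\mathbf{x}_i^{s,p-1}\|\to 0$ and $\|\overline{\mathbf{u}}_i^{p}-\overline{\mathbf{u}}_i^{p-1}\|\to 0$. For any prescribed tolerances $\alpha,\beta>0$, this implies the relative inequalities~(\ref{Equation - Stationary state vector convergence})--(\ref{Equation - Input trajectory convergence}) for all sufficiently large $p$.

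The main obstacle is this last step, since Lipschitz continuity delivers upper bounds on cost changes in terms of iterate changes, not the converse. Making the argument fully rigorous therefore requires either a local growth condition on $J_i$ around accumulation points (for example local quadratic growth, which is generic for twice-differentiable non-degenerate critical points) or, alternatively, interpreting $\alpha$ and $\beta$ as solver-level termination tolerances that are triggered once the cost decrement falls below a companion threshold derived from $\Lambda_i^\ell$ and $\Lambda_i^f$. Because the underlying problem is non-convex with potentially non-unique local minima, I would flag this gap explicitly and either strengthen the hypotheses with a locally isolated-minimum assumption or restrict convergence to the subsequential sense, as is standard in non-convex distributed optimisation.
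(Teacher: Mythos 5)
Your core argument is the same as the paper's: reduce to the conflict-free regime via Theorem~\ref{Theorem - Conflict resolution}, observe that the acceptance rules in Steps~3(b) and~4(b) then make the relevant cost values monotonically non-increasing, invoke Assumption~\ref{Assumption - Bounded cost function minima} to bound them below, and conclude convergence. The main technical difference is bookkeeping: the paper works per agent, chaining $\breve{V}_i^{s,p} \leq \hat{V}_i^{s,p} \leq \breve{V}_i^{s,p-1}$ (the middle inequality because $\hat{V}_i^{s,p}$ is evaluated at agent~$i$'s freshly minimized point with neighbour data fixed at the previous broadcast), whereas you aggregate into the global quantities $V^s$ and $V$ using the gradient decomposition from Part~I of Theorem~\ref{Theorem - Conflict resolution}; both routes deliver the same monotone, bounded-below sequence, and neither is materially harder than the other.

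Where you diverge usefully is the third stage. The paper passes directly from ``the informed cost is non-increasing and bounded below'' to ``$\left( \mathbf{x}_i^s, \mathbf{u}_i^s \right)$ approaches a local minimizer of $J_i(\cdot)$, hence Inequalities~(\ref{Equation - Stationary state vector convergence})--(\ref{Equation - Input trajectory convergence}) hold,'' with no argument that vanishing cost decrements force vanishing iterate changes. Your compactness/accumulation-point/fixed-point sketch, and your explicit observation that Lipschitz continuity (Assumptions~\ref{Assumption - Lipschitz dynamics} and~\ref{Assumption - Lipschitz cost functions}) only bounds cost changes by iterate changes and not conversely, identify exactly the step the paper leaves implicit; your proposed remedies (a local growth or isolated-minimizer condition, or weakening the claim to subsequential convergence, or reading $\alpha,\beta$ as termination tolerances) are reasonable ways to close it. Also note that both you and the paper inherit the probabilistic caveat from Theorem~\ref{Theorem - Conflict resolution}: conflict resolution is only guaranteed with positive probability, so strictly speaking the convergence statement holds only once a conflict-resolving hierarchy has actually been reached, a qualification worth keeping visible in your write-up.
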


\begin{proof}
	Theorem~\ref{Theorem - Conflict resolution} guarantees that, with a large enough number of iterations, system-wide conflict-resolution may be achieved with some probability greater than zero. As per Definition~\ref{Definition - Conflict-free operation}, and focusing on step~3 from Algorithm~\ref{Algorithm - DEMPC coordination algorithm}, once system-wide conflict-free operation has been achieved, the inequality $\breve{V}_i^s \leq \hat{V}_i^s$ is guaranteed to be satisfied after each iteration for any agent~$i \in \mathcal{I}$. Let $\hat{V}_i^{s,p}$ and $\breve{V}_i^{s,p}$ denote $\hat{V}_i^s$ and $\breve{V}_i^s$ computed during iteration number $p$. Since $\hat{V}_i^p$ is computed using optimal state and input trajectories of agent~$i$ that have been updated during iteration $p$, then $\hat{V}_i^{s,p} \leq \breve{V}_i^{s,p - 1}$ must be true during conflict-free operation. Therefore, since $\breve{V}_i^{s,p} \leq \hat{V}_i^{s,p}$ and $\hat{V}_i^{s,p} \leq \breve{V}_i^{s,p - 1}$, then $\breve{V}_i^{s,p} \leq \breve{V}_i^{s,p - 1}$ must be true after conflicts have been resolved, which indicates that the value of $J_i(\cdot)$ computed using updated stationary points of all relevant agents is guaranteed to decrease with each subsequent iteration. If Assumption~\ref{Assumption - Bounded cost function minima} concerning bounded minima holds, then $\left( \mathbf{x}_i^s, \mathbf{u}_i^s \right)$ is guaranteed to approach a local minimizer of $J_i(\cdot)$, thus satisfying Inequality~(\ref{Equation - Stationary state vector convergence}). The above logic may be extended to step~4 of Algorithm~\ref{Algorithm - DEMPC coordination algorithm} without modification; thus concluding the proof.
\end{proof}

\begin{theorem} \label{Theorem - Recursive feasibility}
	(Recursive feasibility) The constraints of Problems~(\ref{Equation - DEMPC - Stationary optimization problem}) and~(\ref{Equation - DEMPC - Dynamic optimization problem}) are guaranteed to be satisfied during each iteration at any sampling time $k \geq 0$ for all agents~$i \in \mathcal{I}$.
\end{theorem}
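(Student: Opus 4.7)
The plan is to prove recursive feasibility by induction on the sampling time~$k$, with a nested induction on the iteration index~$p$ within each time-step. The base case at $k=0$ is supplied directly by step~2 of the \emph{Initialization} phase, which stipulates that $\mathbf{x}_i$, $\overline{\mathbf{x}}_i$, $\overline{\mathbf{u}}_i$, $\mathbf{x}_i^s$, and $\mathbf{u}_i^s$ be chosen to satisfy the dynamics, the terminal equality $\mathbf{x}_i^H=\mathbf{x}_i^s$, the stationarity condition $\mathbf{x}_i^s=\mathbf{f}_i(\mathbf{x}_i^s,\mathbf{u}_i^s)$, and the set membership $\mathbf{x}_i^k\in\mathcal{X}_i$, $\mathbf{u}_i^k\in\mathcal{U}_i$. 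These are exactly the constraints of Problems~(\ref{Equation - DEMPC - Stationary optimization problem}) and~(\ref{Equation - DEMPC - Dynamic optimization problem}), so a feasible initial candidate is in hand.

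For the iteration-wise inductive step, I would show that the stored quadruple $(\overline{\mathbf{x}}_i,\overline{\mathbf{u}}_i,\mathbf{x}_i^s,\mathbf{u}_i^s)$ remains feasible across every pass of steps~3 and~4. In step~3, the returned $(\breve{\mathbf{x}}_i^s,\breve{\mathbf{u}}_i^s)$ satisfies the constraints of Problem~(\ref{Equation - DEMPC - Stationary optimization problem}) by construction; if it is accepted, the candidate is replaced by a feasible point, and if rejected due to conflict, the previous candidate is retained and is feasible by the inner induction hypothesis. An identical argument handles Problem~(\ref{Equation - DEMPC - Dynamic optimization problem}) in step~4. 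I would stress that neither problem's constraints reference the neighbor quantities $\tilde{\mathbf{x}}_j$, $\tilde{\mathbf{u}}_j$, $\hat{\mathbf{x}}_j^s$, or $\hat{\mathbf{u}}_j^s$, so updating these assumed values during the communication exchanges cannot invalidate the local feasibility.

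Next, I would treat the transition from $k$ to $k+1$ via the standard shift-and-append construction of step~5. Applying $\mathbf{u}_i^0$ yields the new measurement $\mathbf{x}_i^{1}=\mathbf{f}_i(\mathbf{x}_i^0,\mathbf{u}_i^0)$, and concatenating the shifted inputs $(\mathbf{u}_i^1,\ldots,\mathbf{u}_i^{H-1},\mathbf{u}_i^s)$ produces a state trajectory whose terminal element equals $\mathbf{f}_i(\mathbf{x}_i^s,\mathbf{u}_i^s)=\mathbf{x}_i^s$, using $(\mathbf{x}_i^s,\mathbf{u}_i^s)\in\mathcal{Z}_i^s$. All shifted state and input vectors lie in $\mathcal{X}_i$ and $\mathcal{U}_i$ by the hypothesis at time $k$, while $\mathbf{x}_i^s\in\mathcal{X}_i^s\subseteq\mathcal{X}_i$ and $\mathbf{u}_i^s\in\mathcal{U}_i$ by Assumption~\ref{Assumption - Controllability}. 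Hence the new candidate satisfies every constraint of Problems~(\ref{Equation - DEMPC - Stationary optimization problem}) and~(\ref{Equation - DEMPC - Dynamic optimization problem}) at time $k+1$, closing the outer induction.

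I expect the subtlest point to be the interaction between steps~3 and~4: step~3 may update $(\mathbf{x}_i^s,\mathbf{u}_i^s)$ without revising the stored $(\overline{\mathbf{x}}_i,\overline{\mathbf{u}}_i)$, so Problem~(\ref{Equation - DEMPC - Dynamic optimization problem}) must still admit a feasible point under the revised terminal equality. The way around this is to exploit the fact that Problem~(\ref{Equation - DEMPC - Stationary optimization problem}) itself optimizes over an internal trajectory connecting the measured $\mathbf{x}_i$ to the new $\breve{\mathbf{x}}_i^s$; that implicit trajectory can be retained as a certificate of feasibility for step~4, with weak controllability (Assumption~\ref{Assumption - Controllability}) guaranteeing that $\mathbf{x}_i^0\in\mathcal{X}_i^{0\rightarrow s}$ remains true whenever a new stationary point is accepted. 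The hard part, in my view, is the bookkeeping: tracking which candidate is active after each acceptance or rejection so that no iteration ever begins with a terminal condition that cannot be reached from the current measurement.
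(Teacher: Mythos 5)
Your proof is correct and follows essentially the same route as the paper's: feasibility within a time-step is preserved because each iteration either accepts a constraint-satisfying optimizer or retains the previous feasible candidate, feasibility across time-steps follows from the shift-and-append construction under the no-disturbance assumption, and base-case feasibility rests on the initialization together with weak controllability (non-emptiness of $\mathcal{X}_i^{0 \rightarrow s}$). Your version is simply more explicit than the paper's terse argument; in particular, your point that the internal trajectory of Problem~(\ref{Equation - DEMPC - Stationary optimization problem}) serves as a feasibility certificate for Problem~(\ref{Equation - DEMPC - Dynamic optimization problem}) after the candidate stationary point is updated addresses a detail the paper glosses over.
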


\begin{proof}
	Provided that initial values of $\overline{\mathbf{x}}_i$, $\overline{\mathbf{u}}_i$, $\mathbf{x}_i^s$, and $\mathbf{u}_i^s$ established at the start of some time-step are feasible, then at least one viable set of values for $\overline{\mathbf{x}}_i$, $\overline{\mathbf{u}}_i$, $\mathbf{x}_i^s$, and $\mathbf{u}_i^s$ exists that satisfies Constraints~(\ref{Equation - DEMPC - Dynamic optimization problem constraints}) and~(\ref{Equation - DEMPC - Stationary optimization problem constraints}), thus ensuring recursive feasibility at each subsequent iteration within the time-step. Further, since the absence of disturbances restricts state progression to $\overline{\mathbf{x}}_i$, then feasibility is preserved at subsequent time-steps. Guaranteeing recursive feasibility thus requires that $\overline{\mathbf{x}}_i$, $\overline{\mathbf{u}}_i$, $\mathbf{x}_i^s$, and $\mathbf{u}_i^s$ are initially feasible. Referring to Eq.~(\ref{Equation - Feasible initial state vector set}), the existence of $\overline{\mathbf{x}}_i$, $\overline{\mathbf{u}}_i$, $\mathbf{x}_i^s$, and $\mathbf{u}_i^s$ such that Constraints~(\ref{Equation - DEMPC - Dynamic optimization problem constraints}) and~(\ref{Equation - DEMPC - Stationary optimization problem constraints}) are satisfied requires that the set $\mathcal{X}_i^{0 \rightarrow s}$ is not empty. Assumption~\ref{Assumption - Controllability} concerning weak controllability bounds the control input trajectory $\overline{\mathbf{u}}_i$ required to steer any initial state vector $\mathbf{x}_i^0 \in \mathcal{X}_i^{0 \rightarrow s}$ to a reachable stationary point $\left( \mathbf{x}_i^s, \mathbf{u}_i^s \right)$ such that $(\mathbf{x}_i^0, \overline{\mathbf{u}}_i, \mathbf{x}_i^s) \in \mathcal{Z}_i^{0 \rightarrow s}$. As a result, the set $\mathcal{X}_i^{0 \rightarrow s}$ must be non-empty; thus concluding the proof.
\end{proof}

\begin{theorem} \label{Theorem - Stability}
	(Closed-loop stability) For all agents~$i \in \mathcal{I}$, as $k \rightarrow \infty$, the measured state vector $\mathbf{x}_i$ will remain bounded within a set $\mathcal{X}_i^* \subset \mathcal{X}_i^{0 \rightarrow s}$ surrounding a fixed stationary point $\mathbf{x}_i^* \in \mathcal{X}_i^s$. The set $\mathcal{X}_i^*$ is defined as follows:
	\begin{equation}
		\mathcal{X}_i^* \coloneqq \left\{ \mathbf{x}_i \in \mathcal{X}_i~|~\exists \overline{\mathbf{u}}_i \in \overline{\mathcal{U}}_i : \left( \mathbf{x}_i, \overline{\mathbf{u}}_i, \mathbf{x}_i^* \right) \in \mathcal{Z}_i^{0 \rightarrow s} \right\}.
	\end{equation}
\end{theorem}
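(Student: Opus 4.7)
The plan is to chain the three preceding results together: Theorem~\ref{Theorem - Recursive feasibility} supplies an admissible trajectory at every time-step, Theorem~\ref{Theorem - Convergence} drives the negotiated stationary point to a fixed limit $\mathbf{x}_i^* \in \mathcal{X}_i^s$, and Assumptions~\ref{Assumption - Controllability} and~\ref{Assumption - Lipschitz dynamics} translate the convergence of that terminal target into boundedness of the closed-loop state.

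First I would show that, once the system has entered conflict-free operation (which happens in finitely many iterations with positive probability by Theorem~\ref{Theorem - Conflict resolution}), the accepted stationary pair $(\mathbf{x}_i^s, \mathbf{u}_i^s)$ approaches a fixed limit. Theorem~\ref{Theorem - Convergence} gives $\|\mathbf{x}_i^{s,p}-\mathbf{x}_i^{s,p-1}\| \leq \alpha \|\mathbf{x}_i^{s,p-1}\|$ within each time-step, and the warm-start in step~5 of Algorithm~\ref{Algorithm - DEMPC coordination algorithm} (appending $\mathbf{u}_i^s$ to $\overline{\mathbf{u}}_i$) keeps the accepted target feasible and conflict-free for the next time-step, so $\breve{V}_i^s$ is non-increasing across time-steps as well. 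Combined with the lower bound from Assumption~\ref{Assumption - Bounded cost function minima} this forces $\mathbf{x}_i^s \to \mathbf{x}_i^*$ for some $\mathbf{x}_i^* \in \mathcal{X}_i^s$, and $\mathbf{u}_i^s$ correspondingly converges to a fixed $\mathbf{u}_i^*$ with $(\mathbf{x}_i^*, \mathbf{u}_i^*) \in \mathcal{Z}_i^s$.

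Second I would convert convergence of the target into boundedness of the measured state. The terminal constraint~(\ref{Equation - Terminal state vector constraint}) together with Theorem~\ref{Theorem - Recursive feasibility} guarantees $(\mathbf{x}_i, \overline{\mathbf{u}}_i, \mathbf{x}_i^s) \in \mathcal{Z}_i^{0\rightarrow s}$ at every sampling instant, so by Assumption~\ref{Assumption - Controllability} we have $\sum_{k=0}^{H-1} \|\mathbf{u}_i^k - \mathbf{u}_i^s\| \leq \gamma(\|\mathbf{x}_i - \mathbf{x}_i^s\|)$. Unrolling the Lipschitz dynamics of Assumption~\ref{Assumption - Lipschitz dynamics} along the predicted trajectory then bounds each $\|\mathbf{x}_i^k - \mathbf{x}_i^s\|$ by a $\mathcal{K}_\infty$ function of $\|\mathbf{x}_i - \mathbf{x}_i^s\|$; applying this to $k=1$ bounds the next measured state $\mathbf{x}_i^+$. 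Iterating the bound while $\mathbf{x}_i^s \to \mathbf{x}_i^*$, and noting that the definition of $\mathcal{X}_i^*$ in the statement is exactly the reachability set of $\mathbf{x}_i^*$ within $\mathcal{X}_i^{0\rightarrow s}$, traps the closed-loop trajectory in $\mathcal{X}_i^*$ as $k\to\infty$.

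The step I expect to be the main obstacle is reconciling the within-time-step iteration index $p$ of Theorem~\ref{Theorem - Convergence} with the physical sampling index $k$, because $\mathbf{x}_i^s$ is itself drifting while $\mathbf{x}_i$ evolves, so the $\gamma$-bound does not immediately contract. I would handle this by treating $\mathbf{x}_i^s$ as a slowly-varying reference: using the monotone decrease of $\breve{V}_i^s$ across sampling instants to make the inter-sample drift $\|\mathbf{x}_i^{s,k+1}-\mathbf{x}_i^{s,k}\|$ dominated by the $\mathcal{K}_\infty$ margin that weak controllability offers, so that the deviation from the moving target is non-expansive and eventually absorbed into a neighborhood of the fixed $\mathbf{x}_i^*$, yielding the claimed invariant set $\mathcal{X}_i^*$.
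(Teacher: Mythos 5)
Your proposal follows essentially the same route as the paper's proof: conflict resolution (Theorem~\ref{Theorem - Conflict resolution}) yields, with positive probability, a hierarchy that is never again mutated; the warm start makes the informed stationary cost $\breve{V}_i^s$ non-increasing from one time-step to the next (the paper's $\breve{V}_i^{s,k}\vert_\mathrm{B} \leq \breve{V}_i^{s,k-1}\vert_\mathrm{B}$ argument); Assumption~\ref{Assumption - Bounded cost function minima} then forces $\left( \mathbf{x}_i^s, \mathbf{u}_i^s \right) \rightarrow \left( \mathbf{x}_i^*, \mathbf{u}_i^* \right)$; and the terminal constraint~(\ref{Equation - Terminal state vector constraint}) confines the measured state to the reachable set of the target, giving $\mathcal{X}_i^*$. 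The only divergence is your second half: the $\gamma$-bound and Lipschitz unrolling are not needed in the paper's argument, because in the disturbance-free setting the measured state follows the feasible candidate trajectory whose terminal point is the (convergent) stationary target, so membership in the reachable set is immediate from recursive feasibility.
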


\begin{proof}
	For all agents~$i \in \mathcal{I}$, while updating candidate state and input vector trajectories $\overline{\mathbf{x}}_i$ and $\overline{\mathbf{u}}_i$, Constraint~(\ref{Equation - Terminal state vector constraint}) ensures that the state vector $\mathbf{x}_i$ always remains within a reachable set surrounding some steady state vector $\mathbf{x}_i^s$. Therefore, in order to prove that this reachable set ultimately maintains a fixed value $\mathcal{X}_i^*$, one must prove that the terminal steady state vector $\mathbf{x}_i^s$ approaches a fixed value $\mathbf{x}_i^*$ as $k \rightarrow \infty$ for all agents~$i \in \mathcal{I}$.
	
	We assume for the time being that, after a sufficient number of iterations, agents establish a social hierarchy that permanently resolves all conflicts. That is, Inequality~(\ref{Equation - Conflict-free stationary point negotiation}) is guaranteed to be satisfied at each iteration within all subsequent time-steps, and the social hierarchy therefore ceases to change. We shall refer to such a social hierarchy as a \textit{universal social hierarchy}. Let $\breve{V}_i^{s,k} |_\mathrm{A}$ and $\breve{V}_i^{s,k} |_\mathrm{B}$ denote $\breve{V}_i^s$ computed at the initial and final iterations, respectively, of time-step $k$. It is clear that, after a universal social hierarchy has been established, $\breve{V}_i^{s,k} |_\mathrm{B} \leq \breve{V}_i^{s,k} |_\mathrm{A}$ is satisfied at all subsequent time-steps. Since $\breve{V}_i^{s,k} |_\mathrm{A}$ is computed using $\breve{\mathbf{x}}_i^s$ and $\breve{\mathbf{u}}_i^s$ obtained by solving Problem~(\ref{Equation - DEMPC - Stationary optimization problem}) at the first iteration of time-step $k$, then $\breve{V}_i^{s,k} |_\mathrm{A} \leq \breve{V}_i^{s,k - 1} |_\mathrm{B}$ must also be true. As a result, $\breve{V}_i^{s,k} |_\mathrm{B} \leq \breve{V}_i^{s,k - 1} |_\mathrm{B}$ must be satisfied at all time-steps following the establishment of a universal social hierarchy. If Assumption~\ref{Assumption - Bounded cost function minima} concerning bounded minima holds, $\left( \mathbf{x}_i^s, \mathbf{u}_i^s \right)$ is guaranteed to approach some fixed point $\left( \mathbf{x}_i^*, \mathbf{u}_i^* \right)$ for all $i \in \mathcal{I}$ as $k \rightarrow \infty$, where $\left(\mathbf{x}_i^*, \mathbf{u}_i^*\right)$ is some local minimizer of $J_i(\cdot)$.
	
	The remaining task is to prove that a universal social hierarchy is in fact attainable after a sufficient number of iterations. The existence of at least one such social hierarchy has already been established in the proof for Theorem~\ref{Theorem - Conflict resolution}. Namely, if Assumption~\ref{Assumption - Neighborhood cooperative objectives} concerning neighborhood-cooperative cost functions holds, and if no two agents~$i$ and~$j$ such that $j \in \mathcal{N}_{-i|J}$ for all $i, j \in \mathcal{I}, j \neq i$ make decisions simultaneously, then Inequality~(\ref{Equation - Conflict-free stationary point negotiation}) is guaranteed to be satisfied at every iteration within any sampling time. If conflict persists, then agents will, after a sufficient number of iterations, self-organize according to a universal social hierarchy with some probability greater than zero; thus concluding the proof.
\end{proof}

\subsection{Determining social hierarchy properties} \label{Subsection - Social hierarchy properties}

One final issue that must be addressed concerns the selection of $N_q$. In order to guarantee that a universal social hierarchy is attainable, $N_q$ must be large enough such that a social hierarchy wherein no neighboring pairs of agents operate in parallel is permissible. This goal invokes the vertex coloring problem from graph theory~\cite{Chartrand2008}. In brief, vertex coloring of a graph requires assigning colors to all nodes such that no two interconnected nodes share the same color. Returning to the context of the current paper, each \textit{node} signifies an agent, each \textit{color} represents a specific level along a social hierarchy, and \textit{interconnection} symbolizes cost function coupling. In graph theory, the \textit{chromatic number} refers to the minimum number of colors required to complete the vertex coloring problem. Therefore, in the current context, $N_q$ should be equal to or greater than the chromatic number of the system graph.

\section{Numerical example} \label{Section - Numerical example}

\subsection{Problem description}

Consider the mechanical system described in Fig.~\ref{Figure - Numerical problem setup}. This setup consists of $N$ square plates that are supported by spring-damper systems and perfectly aligned at equilibrium. Each square plate has a mass $m = 1.0\,\si{kg}$, side length $L = 0.25\,\si{m}$, and is supported by stiffness and damping coefficients $k = 1.0\,\si{N/m}$ and $c = 1.0\,\si{kg/s}$. Further, the vertical position of each plate~$i$ is controlled via an input force $u_i$ that is regulated by agent~$i$. The resulting continuous-time dynamics of each plate are expressed in state-space form as follows:
\begin{equation} \label{Equation - Numerical problem model}
	\begin{bmatrix}
		\dot{x}_i \\
		\dot{v}_i
	\end{bmatrix}
	= \begin{bmatrix}
		0 & 1 \\
		-\frac{k}{m} & -\frac{c}{m}
	\end{bmatrix}
	\begin{bmatrix}
		x_i \\
		v_i
	\end{bmatrix}
	+
	\begin{bmatrix}
		0 \\
		\frac{u_i}{m}
	\end{bmatrix},
\end{equation}
where $x_i$ and $v_i$ denote the vertical position and velocity of plate~$i$. For the remainder of the current section, we work with the discrete-time form of Eq.~(\ref{Equation - Numerical problem model}).

\begin{figure}
	\centering
	\includegraphics[width=4in]{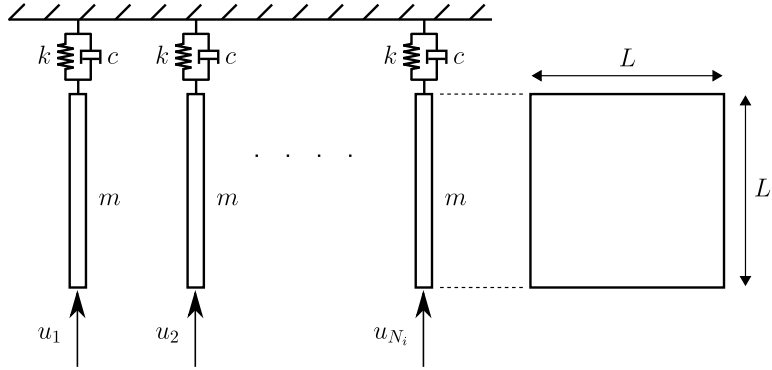}
	\caption{Schematic of the plate overlap problem used in the numerical example.} \label{Figure - Numerical problem setup}
\end{figure}

The economic control objective of each agent~$i$ is to, without excessive actuation, minimize the overlap area between its respective plate and those of neighboring agents~$i - 1$ and $i + 1$. The resulting stage cost function of agent~$i$ is therefore expressed as follows:
\begin{equation} \label{Equation - Numerical problem cost function}
	J_i(x_i, u_i, x_{-i}) = \frac{1}{\left| \mathcal{N}_{-i} \right|} \sum_{j \in \mathcal{N}_{-i}} A_{i,j}(x_i,x_j) + u_i^\mathrm{T} u_i,
\end{equation}
where the set $\mathcal{N}_{-i}$ contains the indices of all neighboring plates~$j$ that are physically adjacent to plate~$i$, $\left| \mathcal{N}_{-i} \right|$ is the cardinality of the set $\mathcal{N}_{-i}$, and $A_{i,j}(\cdot)$ defines the overlap area between agents~$i$ and~$j$ as follows:
\begin{equation}
	A_{i,j}(x_i,x_j) = \left\{
	\begin{matrix}
		0 & , & \left| x_i - x_j \right| \geq L, \\
		L \left( L - \left| x_i - x_j \right| \right)& , & \left| x_i - x_j \right| < L.
	\end{matrix}\right.
\end{equation}
Note that, although simple in terms of system dynamics, the numerical problem described above entails nonlinear and non-convex cost functions. Physically, non-convexity stems from the property that any two adjacent plates may be relocated in multiple ways to minimize their respective overlap areas.

\subsection{Social hierarchy-based DEMPC properties}

The interaction graph for an example problem with ten plates is shown in Fig.~\ref{Figure - Numerical problem graph}. The solution to the vertex coloring problem for this example involves simply alternating the color of each subsequent node, which yields a chromatic number of two regardless of the quantity of vertices. An appropriate choice for the number of hierarchy levels is thus $N_q = 2$.

\begin{figure}
	\centering
	\includegraphics[width=2in]{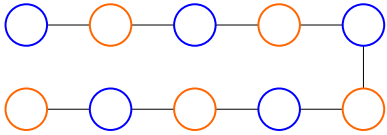}
	\caption{Interaction graph of the plate overlap problem with vertex coloring used to identify a universal social hierarchy.} \label{Figure - Numerical problem graph}
\end{figure}

We implement prediction and control horizons of $H = 5$ time-steps and a sampling period of $1.0\,\si{\sec}$. Within a single sampling period, $N_p = 5$ iterations are permitted for the negotiation of optimal stationary points (\textit{i.e.} step~3 in Algorithm~\ref{Algorithm - DEMPC coordination algorithm}), followed by five more iterations for trajectory optimization (\textit{i.e.} step~4 in Algorithm~\ref{Algorithm - DEMPC coordination algorithm}). Finally, the actuation force of any agent is bounded as $-0.25\,\si{N} \leq u_i \leq 0.25\,\si{N}$. These values were selected to limit the steady-state displacement of each plate to a maximum of $L = 0.25\,\si{m}$.

Finally, all plates are initially at rest with zero displacement from equilibrium (\textit{i.e.} $\mathbf{x}_i = \mathbf{0}$ and $\mathbf{u}_i = \mathbf{0}$) and are therefore perfectly aligned with their neighbors. The state and input vector trajectories of all agents are initialized as $\overline{\mathbf{x}}_i = \left( \mathbf{0}, \cdots, \mathbf{0} \right)$ and $\overline{\mathbf{u}}_i = \left( \mathbf{0}, \cdots, \mathbf{0} \right)$.

\subsection{Simulation results}

\subsubsection{Social hierarchy-based DEMPC -- ten plates}

The first results we present pertain to five simulations involving ten plates. The outcomes in these simulations will differ due to the element of randomness in the proposed coordination algorithm. In Fig.~\ref{Figure - Results - 10 plates, social hierarchy evolution}, we show the variation in social hierarchy levels over the iteration number for all five simulations. Rather than displaying the actual hierarchy level of any particular agent, which would yield a cluttered image, we instead present the cumulative number of social hierarchy level changes. That is, let some cumulative counter start at zero for each simulation. Then, each time an agent alters its social hierarchy level, the cumulative count increases by one.

\begin{figure}
	\centering
	\includegraphics[width=4in]{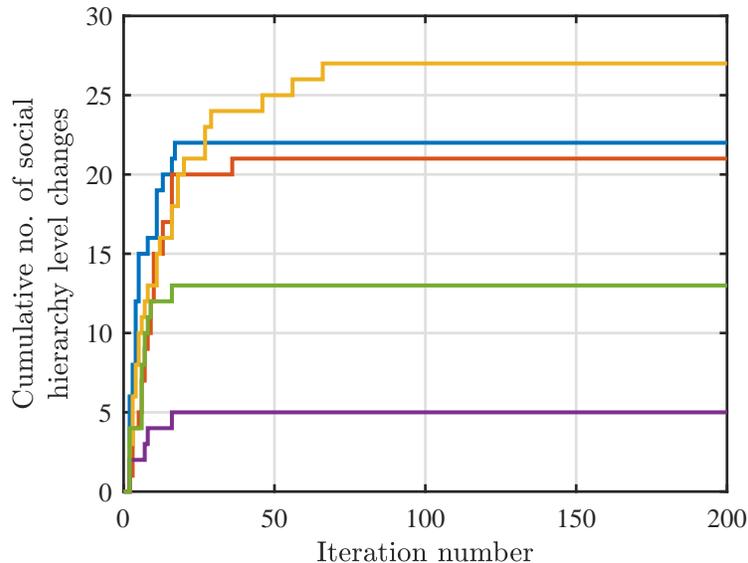}
	\caption{Evolution of the cumulative number of social hierarchy changes for five simulations of ten plates. Each color corresponds to a different simulation.} \label{Figure - Results - 10 plates, social hierarchy evolution}
\end{figure}

What is observed in Fig.~\ref{Figure - Results - 10 plates, social hierarchy evolution} is that, in all five simulations, the cumulative count of social hierarchy variations ultimately reaches a fixed value. This outcome indicates that, with a sufficient number of iterations, the agents sort themselves along a social hierarchy that guarantees conflict-free operation according to Definition~\ref{Definition - Conflict-free operation} in all future iterations, thus validating Theorem~\ref{Theorem - Conflict resolution} concerning conflict resolution.

Next, we plot the evolution of the global cost function $V$ with respect to the iteration number in Fig.~\ref{Figure - Results - 10 plates, global cost function} for all five simulations, with $V$ computed as follows:
\begin{equation}
	V = \sum_{i \in \mathcal{I}} \breve{V}_i.
\end{equation}
Note that the evolution of $V$ differs in each case due to the element of stochasticity inherent to the proposed algorithm. However, in all five simulations, a reduction in $V$ to some locally-optimal value is evident after conflicts have been resolved. Further, within each time-step (\textit{i.e.} within each 20 iteration interval), the reduction or preservation of $V$ is apparent after agents have settled on an appropriate social hierarchy. This latter outcome validates Theorem~\ref{Theorem - Convergence} on convergence.

\begin{figure}
	\centering
	\includegraphics[width=4in]{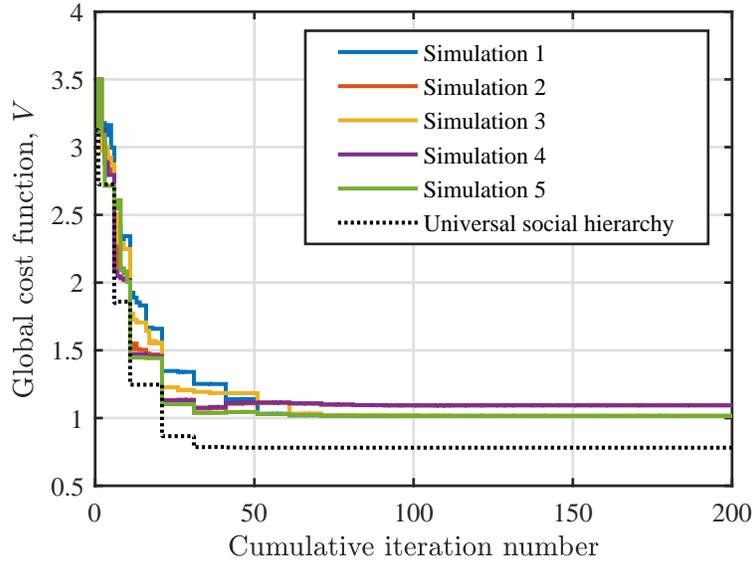}
	\caption{Evolution of the global cost function for five simulations of ten plates.} \label{Figure - Results - 10 plates, global cost function}
\end{figure}

As a reference in Fig.~\ref{Figure - Results - 10 plates, global cost function}, we plot (using a black dotted line) the globally-optimal evolution of $V$, which is obtained by initializing the agents' hierarchy levels according to the universal social hierarchy from Fig.~\ref{Figure - Numerical problem graph}. In this case, the agents do not alter their social hierarchy levels as conflict-free operation is guaranteed from the beginning of the simulation. As a result, $V$ is reduced or preserved within each time-step immediately from the start of the simulation. This result validates the existence of a universal social hierarchy.

In Fig.~\ref{Figure - Results - 10 plates, stationary targets}, we plot the evolution of stationary target positions over the iteration number for five simulations. Rather than displaying the stationary targets of individual agents, which would yield a cluttered image, we plot the mean stationary target $\overline{\mathbf{x}}_s$ of the entire plant, which is computed as follows:
\begin{equation}
	\overline{\mathbf{x}}_s = \frac{1}{N} \sum_{i \in \mathcal{I}} \breve{x}_i^s.
\end{equation}
It is evident from Fig.~\ref{Figure - Results - 10 plates, stationary targets} that values of $\breve{x}_i^s$ for all agents~$i \in \mathcal{I}$ ultimately converge to fixed values in all simulations. This outcome validates Theorem~\ref{Theorem - Stability} concerning bounded closed-loop stability, which requires fixed stationary targets.

\begin{figure}
	\centering
	\includegraphics[width=4in]{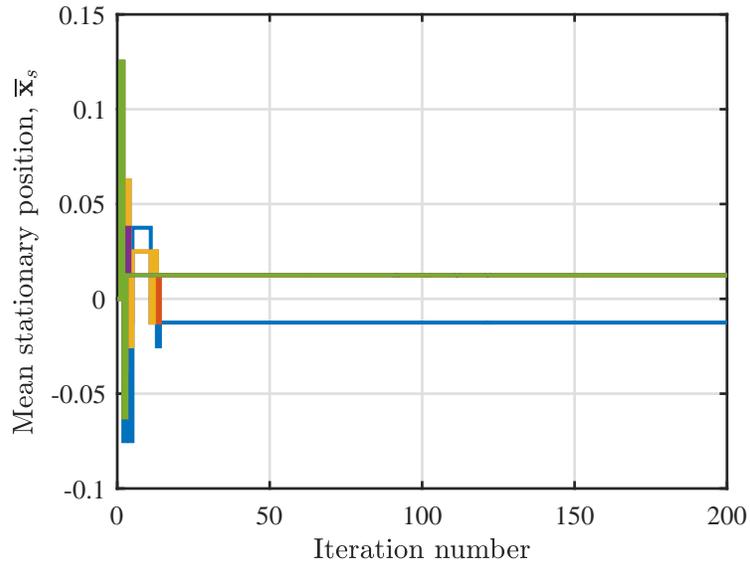}
	\caption{Evolution of the mean system-wide stationary target position for five different simulations involving ten plates.} \label{Figure - Results - 10 plates, stationary targets}
\end{figure}

Finally, we have plotted the locations of all plates at the final sampling time in each simulation in Fig.~\ref{Figure - Results - 10 plates, plate positions}. The color of each plate indicates its social hierarchy level (\textit{i.e.} blue denotes $q_i = 1$, red denotes $q_i = 2$). The globally-optimal configuration would involve all adjacent plates being relocated in opposite directions; however, the social hierarchy-based DEMPC algorithm is only capable of finding a locally-optimal layout wherein some plates (\textit{e.g.} plate~8 in simulation~1) must remain at the origin to minimize overlap with their neighbors.

\begin{figure}
	\centering
	\includegraphics[width=4in]{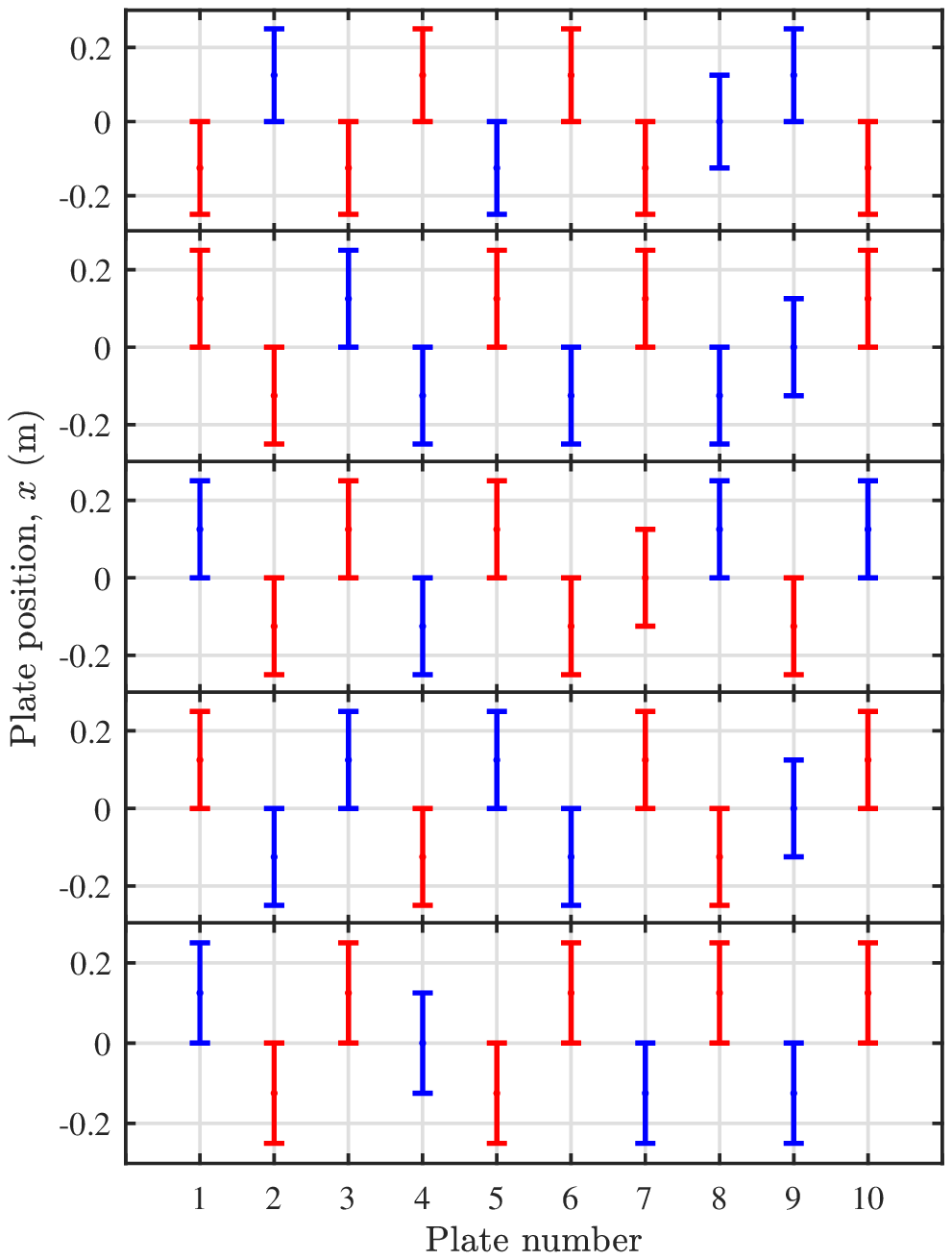}
	\caption{Plate locations at the final sampling time in (from top to bottom) simulations 1 to 5.} \label{Figure - Results - 10 plates, plate positions}
\end{figure} 

\subsubsection{Parallel vs. social hierarchy-based DEMPC}

We compare the performance of the proposed social hierarchy-based DEMPC to a basic parallel DEMPC algorithm wherein all agents solve their local EMPC problems simultaneously and exchange stationary vectors and trajectories with their neighbors. In essence, a parallel DEMPC algorithm is identical to Algorithm~\ref{Algorithm - DEMPC coordination algorithm}, except with $N_q = 1$.

We first plot the evolution of $V$ over the iteration number using a parallel DEMPC algorithm in Fig.~\ref{Figure - Results - Par-DEMPC, Var. plates, global cost function} for five simulations, each consisting of a different number of plates. The parallel DEMPC algorithm is in fact able to naturally resolve conflicts and ultimately decrease $V$ to some locally minimum value. However, as the quantity of agents $N$ increases, a greater number of iterations is required to decrease $V$ to a locally-minimum value. For instance, with $N = 10$, $V$ reaches a local minimum in 20~iterations. If $N = 80$ however, over 100~iterations are required.

\begin{figure}
	\centering
	\includegraphics[width=4in]{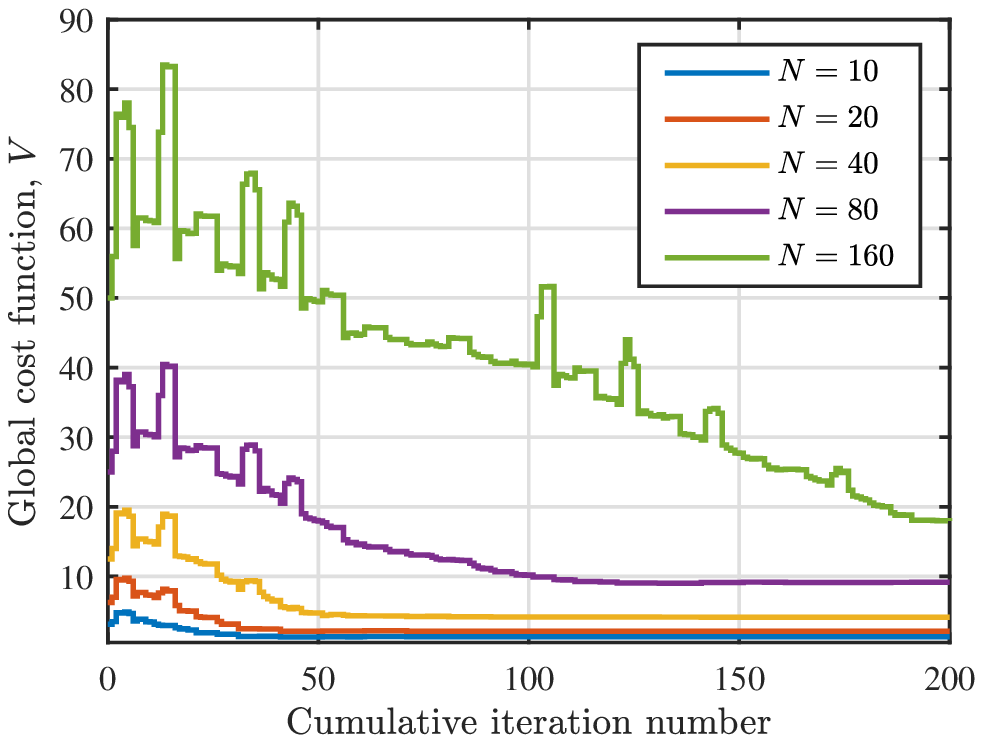}
	\caption{Evolution of the global cost function $V$ using a parallel DEMPC algorithm for five simulations involving different quantities of plates.} \label{Figure - Results - Par-DEMPC, Var. plates, global cost function}
\end{figure}

In Fig.~\ref{Figure - Results - Var. plates, global cost function}, we plot the same information as Fig.~\ref{Figure - Results - Par-DEMPC, Var. plates, global cost function}, except using the proposed social hierarchy-based DEMPC algorithm. The improvement is clear. The value of $N$ has no discernible effect on the number of iterations necessary for reducing $V$ to a locally-minimum value. In each simulation, approximately 30~iterations are required for $V$ to settle at some minimum value. This outcome results from the fact that the likelihood of any agent resolving conflict locally is dependent solely on its neighborhood interaction topology. Thus, raising $N$ should not impact the number of iterations required for system-wide conflict resolution. This outcome is further validation of the scalability of the social hierarchy-based method.

\begin{figure}
	\centering
	\includegraphics[width=4in]{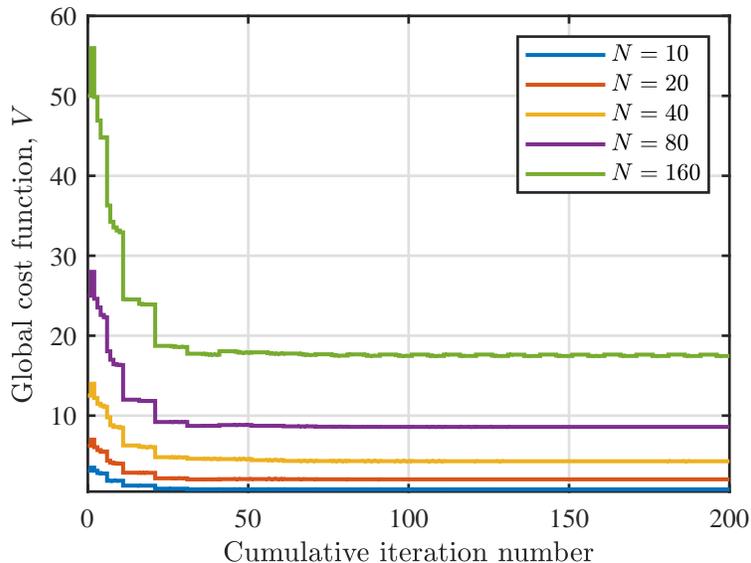}
	\caption{Evolution of the global cost function $V$ using the proposed social hierarchy-based DEMPC algorithm for five simulations involving different quantities of plates.} \label{Figure - Results - Var. plates, global cost function}
\end{figure}

\section{Conclusions and recommendations for future research} \label{Section - DEMPC - Conclusions}

We have presented a novel concept for addressing non-convexity in cost functions of distributed economic model predictive control systems with unknown terminal stationary targets. This concept involves agents self-organizing into a finite hierarchy using evolutionary principles, and ultimately enables agents to make decisions that are mutually beneficial with those of their neighbors. Theorems guaranteeing convergence, recursive feasibility, and bounded closed-loop stability have also been provided for the proposed social hierarchy-based algorithm.

These theorems were validated using a numerical example involving a series of suspended square plates wherein each agent attempted to minimize the overlap area between its respective plate and those of its neighbors. Results showed that, across five simulations, the proposed algorithm was capable of establishing a social hierarchy that reduced the system-wide cost function to some locally-minimum value. Another observation from numerical results was that increasing the size of the distributed system (\textit{i.e.} the number of plates and agents) had no discernible effect on the number of iterations required to minimize cost function values to local minima. This behavior was not observed when using a parallel DEMPC algorithm with no mechanism for addressing non-convexity.

Several research directions exist for further enhancing the proposed DEMPC algorithm; developing non-iterative algorithms using compatibility constraints as first proposed by Dunbar and Murray~\cite{Dunbar2006}; employing Lyapunov constraints to guarantee asymptotic stability rather than bounded stability; guaranteeing convergence, feasibility, and stability under the effects of coupled dynamics and constraints; ensuring robustness in the presence of bounded disturbances in the system dynamics and cost functions; application of the proposed algorithm to distributed systems with weak dynamic coupling such as autonomous vehicle trajectory planning and wind farm control.

\section*{Acknowledgment}

The authors are grateful for the financial support provided by the Natural Sciences and Engineering Research Council of Canada (NSERC).

\bibliographystyle{unsrt}
\bibliography{Library.bib}

\end{document}